\documentclass{article}
\usepackage[utf8]{inputenc}

\usepackage{fullpage}


\usepackage{amsmath}
\usepackage{amsfonts}
\usepackage{amssymb}
\usepackage{amsthm}
\usepackage{bbm}

\usepackage[numbers]{natbib}

\usepackage{comment}

\usepackage{todonotes}
\usepackage{thmtools}


\DeclareMathOperator*{\argmax}{arg\,max}

\newcommand{\prob}{\texttt{Probed}}
\newcommand{\del}{\texttt{DEL}}
\newcommand{\opt}{\E[\texttt{OPT}]}
\newcommand{\E}{\mathbb{E}}
\newcommand{\Iin}{\mathcal I}

\newcommand{\supp}{\operatorname*{supp}}

\newtheorem{theorem}{Theorem}[section]
\newtheorem{lemma}[theorem]{Lemma}

\newtheorem{proposition}[theorem]{Proposition}

\theoremstyle{definition}
\newtheorem{definition}[theorem]{Definition}

\newtheorem{corollary}[theorem]{Corollary}

\title{Delegated Pandora's Box}
\author{
    Curtis Bechtel\thanks{supported by NSF Grants CCF-1350900 and CCF-2009060.}\\
    Department of Computer Science \\
    University of Southern California \\
    {\tt bechtel@usc.edu}
        \and
    Shaddin Dughmi\thanks{supported by NSF CAREER Award CCF-1350900 and NSF Grant CCF-2009060.} \\
    Department of Computer Science \\
    University of Southern California \\
    {\tt shaddin@usc.edu}
        \and
    Neel Patel\thanks{supported by NSF Grants CCF-1350900 and CCF-2009060.}\\
    Department of Computer Science \\
    University of Southern California \\
    {\tt neelbpat@usc.edu}
}
\date{\vspace{-5ex}}

\begin{document}

\maketitle


\begin{abstract}
    In delegation problems, a principal does not have the resources necessary to complete a particular task, so they delegate the task to an untrusted agent whose interests may differ from their own. Given any family of such problems and space of mechanisms for the principal to choose from, the delegation gap is the worst-case ratio of the principal's optimal utility when they delegate versus their optimal utility when solving the problem on their own. In this work, we consider the delegation gap of the generalized Pandora's box problem, a search problem in which searching for solutions incurs known costs and solutions are restricted by some downward-closed constraint. First, we show that there is a special case when all random variables have binary support for which there exist constant-factor delegation gaps for matroid constraints. However, there is no constant-factor delegation gap for even simple non-binary instances of the problem. Getting around this impossibility, we consider two variants: the free-agent model, in which the agent doesn't pay the cost of probing elements, and discounted-cost approximations, in which we discount all costs and aim for a bicriteria approximation of the discount factor and delegation gap. We show that there are constant-factor delegation gaps in the free-agent model with discounted-cost approximations for certain downward closed constraints and constant discount factors. However, constant delegation gaps can not be achieved under either variant alone. Finally, we consider another variant called the shared-cost model, in which the principal can choose how costs will be shared between them and the agent before delegating the search problem. We show that the shared-cost model exhibits a constant-factor delegation gap for certain downward closed constraints.
\end{abstract}

\section{Introduction}
\label{introduction}

We take the natural next step in the study of delegated stochastic search problems involving multivariate decisions, constraints, and costs. The work of Bechtel and Dughmi \cite{bechtel2020delegated} has provided a fairly thorough understanding of principal-agent delegation in the presence of ``hard'' constraints on the search procedure --- scenarios of this form can be viewed as a principal \emph{delegating} a \emph{stochastic probing} problem to an agent. In this paper, we build a similar understanding when search is associated with cardinal costs instead. Scenarios of this form feature a principal who delegates, to an agent, a combinatorial generalization of the famous Pandora's box problem of Weitzman \cite{weitzman1979optimal}. As in the most relevant prior work on delegation, we imbue the principal with the power of commitment, rendering this a mechanism design problem.\footnote{In particular, a mechanism design problem without money.}

The conceptual starting point in this area is the work of Kleinberg and Kleinberg \cite{kleinberg2018delegated}, who consider a \emph{principal} delegating the selection of one option (we say \emph{element})  out of finitely many to an \emph{agent}. As a running example, consider a firm (the principal) delegating the selection of one job candidate out of many (the elements) to an outside recruitment agency (the agent). Each element is associated with a stochastic reward for both principal and agent, with independence across elements. The agent is tasked with ``exploring'' (we say \emph{probing}) these rewards and proposing one of them, which the principal may choose to accept or reject.

Problems of this form are most natural when exploration is not free, and \citet{kleinberg2018delegated} consider one model featuring a hard constraint on the number of options explored, and a second model featuring cardinal costs associated with exploration for both principal and agent. Bechtel and Dughmi~\cite{bechtel2020delegated} generalize the first model, in particular to settings in which exploration is combinatorially constrained (this is referred to as the \emph{outer} constraint), and multiple elements may be selected subject to another combinatorial constraint (this is referred to as the \emph{inner} constraint). When multiple elements are selected, rewards are additive for both the principal and the agent. In this paper, we similarly generalize the second model of \cite{kleinberg2018delegated}: there is no outer constraint on exploration, but rather per-element probing costs for the principal and agent. Moreover, there again is an inner constraint (which we will often refer to simply as the constraint) on the set of elements selected. Rewards and probing costs are now both additive across elements. The problem being delegated here is a generalized Pandora's box problem, as in \cite{singla2018price}.

There are multiple natural ways of instantiating the utilities of both the principal and the agent, depending on who we assume incurs the exploration (i.e., probing) costs. Some ways in which costs may be shared include:
\begin{itemize}
\item The principal and agent each pay a fixed percentage of the total probing cost. In our running example, the recruitment agency may have a policy in which they only pay a fixed fraction of the cost of interviewing each candidate. Such scenarios fall under our first model which we refer to as the \emph{standard model} of utilities\footnote{As long as the principal or the agent does not pay the entire cost in the standard model, we can re-scale their utilities and without loss of generality, assume that they both pay the equal cost.}. \citet{kleinberg2018delegated} assumes cost-sharing according to the standard model of utilities.

\item The principal pays the full cost of exploration. In our running example, recruitment agencies only commit to investing their time and expertise, where the principal bears the entire cost of exploration. We refer this model as the \emph{free-agent} model.

\item The principal chooses as part of their strategy how individual costs are shared. In our running example, the principal may be willing to pay a large fraction of the cost of interviewing good (in expectation) candidates, but still allows the agent to interview bad (in expectation) candidates so long as they bear most of the cost. We refer to this model as the \emph{shared-cost} model. This model provides the principal with much more power when delegating.
\end{itemize}
No matter our utility model and cost model, we seek mechanisms which approximate the principal's optimal \emph{non-delegated utility}: the maximum expected utility the principal can obtain by solving the search problem themselves. When such a mechanism matches the non-delegated utility up to a factor $\alpha$, we refer to it as an \emph{$\alpha$-factor} mechanism.

\subsection*{Our Models and Results}

In our first model --- which we refer to as the \emph{standard model} of utilities --- we follow in the footsteps of \cite{kleinberg2018delegated} by incorporating the exploration costs into both the principal and agent's utilities.\footnote{Whereas it is not uncommon for the agent in delegation to bear the costs of completing the task, this model also incorporates the costs into the principal's objective. This can capture a principal concerned with optimizing a social objective, as well as scenarios in which probing costs are shared equally by the principal and agent.}
Our results for this model are a mixed bag: When each element's reward distribution has binary support, we obtain constant-approximate delegation mechanisms when the constraint is a matroid. The proof proceeds via a reduction to the matroid prophet inequalities against an almighty adversary from \cite{feldman2016online}. This result generalizes the result of \cite{kleinberg2018delegated} for their second model, which also features binary distributions.
On the other hand, we obtain strong impossibility results for non-binary distributions, ruling out any sublinear (in the number of elements) approximation to the principal's optimal non-delegated utility, even for the rank one matroid. This shows that the result of \cite{kleinberg2018delegated} for their second model, which also features a rank one matroid constraint, can not be generalized to non-binary distributions. Even more emphatically, we rule out certain bicriteria approximations for the standard model of utilities: even if  probing costs are discounted by any absolute constant, the principal's delegated utility can not approximate --- up to any constant --- their undelegated utility in the undiscounted setting.

Motivated by our impossibility results for the standard model of utilities, we explore models in which exploration costs are shared unequally between the principal and the agent. In the \emph{free-agent model}, the agent incurs no exploration costs, which are born entirely by the principal. For various constraints such as matroids, matchings, and knapsacks, we obtain bicriteria approximate mechanisms of the following form for various pairs of constants $\alpha, \delta$: the principal's delegated utility in the setting where probing costs are discounted by $\delta$ matches, up to a factor of $\alpha$, their optimal undelegated utility in the undiscounted setting. Our results proceed by reduction to the online contention resolution schemes against an almighty adversary from \cite{feldman2016online}. We complement this with a negative result, ruling out the traditional uni-criteria constant-approximate mechanisms. Specifically, absent any discount on probing costs, no delegation mechanism approximates  the principal's optimal undelegated utility up to any constant.

Our final utility model allows the principal to declare, up front as part of their mechanism, an arbitrary split of the probing cost for each element between the principal and the agent. We refer to this as the \emph{shared-cost model}. This turns out to be the most permissive of our models: for constraints including matroids, matchings, and knapsacks, we obtain delegation mechanisms which approximately match, up to a constant, the principal's optimal undelegated utility.\footnote{We note that, since the principal can offload much of the costs of exploration to the agent, there exist instances in which the principal's delegated utility strictly exceeds their undelegated utility. However, we also show that there are simple instances in which the principal's delegated utility is necessarily less than their undelegated utility, ruling out general results with approximation factors exceeding $1$.} Our results here are again by reduction to online contention resolution schemes against an almighty adversary from \cite{feldman2016online}.

Lastly, we also begin a preliminary exploration of randomized mechanisms for delegating the generalized Pandora's box problem. We obtain negative results for a restricted class of randomized mechanisms, and leave open the general question of whether randomization yields significantly more power in this setting, including whether it overcomes some of our impossibility results for deterministic mechanisms.

\subsection*{Additional Discussion of Related Work}
For additional discussion of related work pertaining to delegation, stochastic probing problems, prophet inequalities, and contention resolution, we refer the reader to \cite{bechtel2020delegated}. Also relevant to this paper is the work on generalizations of the Pandora's box problem. In particular, Singla \cite{singla2018price} introduces a model generalizing the $1$-uniform matroid ``inner'' constraint to arbitrary downward-closed constraints and proposes constant-factor algorithms for matroids, matchings, and knapsack constraints. Gamlath et. al. \cite{gamlath2019beating} further improves approximation guarantees for the generalized Pandora's box problem with matching constraints.

\section{Preliminaries}
\label{preliminaries}

\subsection{Pandora's Box}

Weitzman’s Pandora’s box problem \cite{weitzman1979optimal} is defined as follows: given probability distributions of $n$ independent random variables $X_1, \dots, X_n$ over $\mathbb R_{\ge 0}$ and their respective probing costs $c_1, \dots, c_n$, adaptively probe a subset $\prob \subseteq [n]$ that maximizes the expected utility:
\begin{equation}
	\E \left[ \max_{i \in \prob} \{X_i\} - \sum_{i\in \prob} c_i \right].
\end{equation}

Weitzman \cite{weitzman1979optimal} proposes a simple but optimal strategy for maximizing expected utility. For each element $i \in [n]$, this strategy chooses a \emph{cap value} (sometimes called priority value or surplus value) $\tau_i$ satisfying $\E[(X_i - \tau_i)^+] = c_i$. Then it probes elements in decreasing order of cap value, stopping the first time that the largest observed $X_i$ value exceeds the largest unprobed cap value. Finally, it selects the element $i$ with maximum observed $X_i$.

In this work, we focus on the more general version of the Pandora's box problem defined in \cite{singla2018price}. We are given a set of elements $E$ and a downward-closed constraint $\Iin \subseteq 2^E$ over the ground set $E$. The goal is to adaptively probe a set of elements $\prob$ and select a set of feasible elements $S \subseteq \prob$ for which $S \in \Iin$ that maximizes the following objective:
\begin{equation}
	\E \left[ \sum_{i \in S} X_i - \sum_{i \in \prob} c_i\right]. \label{eq:objective}
\end{equation}
For the remainder of the paper, we will write $X(S) = \sum_{i \in S} X_i$ and $c(S) = \sum_{i \in S} c_i$ in any setting with utilities $\{X_i\}_{i \in E}$ and costs $\{c_i\}_{i \in E}$. We will also refer to $(i, x)$ for any $i \in E$ and $x \in \mathbb R_{\ge 0}$ as a possible \emph{outcome} or \emph{realization} of element $i$.

Singla \cite{singla2018price} proposes constant-factor approximation algorithms for the general Pandora's box problem for many constraints. In particular, these algorithms are optimal for matroids and $2$-approximate for both matching and knapsack constraints. 


\subsection{Greedy Prophet Inequality}
An instance of the generalized prophet inequality problem is given by a set system $\mathcal M$ with ground set $E$ and feasible sets $\mathcal I$ and independent random variables $X_i$ supported on $\mathbb R_{\ge 0}$ for all $i \in E$. We take the perspective of the \emph{gambler}, who knows $\mathcal M$ and the distributions of the random variables $\{X_i\}_{i \in E}$. The gambler starts with an empty set $S$ of accepted elements and then observes each element in $E$ in an order chosen by an adversary. For the purposes of this paper, we play against the \emph{almighty adversary} defined in \cite{feldman2016online}, the strongest possible adversary, who knows all the coin flips of the gambler's strategy. When the element $i \in E$ arrives, the gambler learns the realization of $X_i$ and has to decide online whether to accept element $i$ or not based on $(i,x_i)$ and the previously accepted elements $S$. However, they can only accept $i$ if $S \cup \{i\}$ is feasible in $\mathcal M$. The gambler seeks to maximize their utility $\E (X(S)) = \E \left[ \sum_{i \in S} X_i \right]$, and in particular to compete with a \emph{prophet} who plays the same game and knows the realizations of all random variables in advance. If the gambler has a strategy guaranteeing an $\alpha$ fraction of the prophet’s expected utility in expectation, we say that we have an $\alpha$-factor prophet inequality.
We now define a particular class of strategies for the gambler:
\begin{definition}[\emph{Greedy monotone strategy} $\mathcal A_t$]
A greedy monotone strategy $\mathcal A_t$ for the gambler is described by choice of thresholds $t = \{t_i : i\in E\}$ and a downward closed system $\mathcal I_t\subseteq \mathcal I$, and can be expressed as $\mathcal A_t = \{\{(i, x_i) : i \in S \} : S \in \mathcal I_{t} \text{~and~} x_i \geq t_i \text{~for all~} i \in S\}$. A gambler following $\mathcal A_t$ accepts element $i$ with outcome $(i, x_i)$ if and only if $x_i \geq t_i$ and set of elements accepted so far along with the element $i$ stays in $\mathcal I_t$.
\end{definition}
Greedy monotone strategies for the gambler is proposed in \cite{feldman2016online} for matroid, matching, and knapsack constraints that achieve $1/4$, $1/2e$, and $3/2-\sqrt 2$ factor prophet inequality respectively.


\subsection{c-Selectable Greedy OCRS Schemes}
 We will give a brief overview of online contention resolution schemes \cite{feldman2016online} in this section.Gi ven a downward-closed family $\mathcal I$ over the ground set of elements $E$ with $|E| = n$, let $P_{\mathcal I} \subseteq [0, 1]^n$ be the convex hull of the indicator vectors of all feasible sets: $P_{\mathcal I} = \operatorname{conv}(\{\mathrm{1}_F : F \in \mathcal I\})$. We say that a convex polytope $P \subseteq [0,1]^n$ is a \emph{relaxation} of $P_\mathcal I$ if it contains the same $\{0,1\}$-points, i.e. $P \cap \{0,1\}^n = P_{\mathcal I} \cap \{0,1\}^n$.

Consider the following online problem: given some $\mathcal I$ as above and some $x \in P_{\mathcal I}$, let $R(x)$ be a random subset of \emph{active} elements, where each element $i \in E$ is active with probability $x_i$ independently of all others. The elements in $E$ are revealed online in an order chosen by an adversary, and when each element $i$ is revealed, we learn whether or not $i \in R(x)$. After we learn the state of element $i$, we must irrevocably decide whether or not to select $i$. An OCRS for $P$ is an online algorithm that selects a subset $S \subseteq R(x)$ such that $S \in \mathcal I$.

\begin{definition}[Greedy $c$-selectable OCRS]
	Let $P \subseteq [0,1]^n$ be a relaxation of $P_{\mathcal I}$. A greedy OCRS $\pi$ for $P$ is an OCRS that for any $x \in P$ defines a downward-closed family of sets ${\mathcal I}_x \subseteq \mathcal I$. Then an active element $i$ is selected if, together with the already selected elements, the obtained set is in $\mathcal I_x$. Moreover, we say the greedy OCRS is $c$-selectable if for all $x \in P$ and $i \in E$
	\begin{equation*}
		\Pr[I \cup \{i\} \in \mathcal I_x \text{~for all~} I \subseteq R(x) \text{~and~} I \in \mathcal I_x] \geq c.
	\end{equation*}
\end{definition}

\section{Delegation Model}
\label{sec:model}

In this paper, we will use several slightly different models of delegation which can be viewed as variants of a single \emph{standard model} of delegated Pandora's box. This model formally consists of: two players called the \emph{principal} and the \emph{agent}; a ground set of \emph{elements} $E$; for each element $i \in E$, an independent distribution $\mu_i$ over $\mathbb{R}_{\ge 0} \times \mathbb{R}_{\ge 0}$ giving possible utility pairs for the principal and agent, respectively; for each element $i \in E$, a \emph{probing cost} $c_i \in \mathbb{R}_{\ge 0}$; and a downward-closed set system $\mathcal{M} = (E, \mathcal{I})$ with feasible sets $\mathcal{I}$ over the ground set $E$ (i.e. $\mathcal I \subseteq 2^E$ and if $S \in \mathcal I$ then $T \in \mathcal I$ for any $T \subseteq S$). 

Given an element $i$, we let $X_i$ and $Y_i$ be random variables denoting the random value obtained for the principal and agent from element $i$ with joint distribution $(X_i, Y_i) \sim \mu_i$, where $X_i$ and $Y_i$ may be arbitrarily correlated but are independent of random variables from other elements. For any $(x, y) \in \supp(\mu_i)$, we call $(i, x, y)$ an \emph{outcome} or \emph{realization} of element $i$. For any set of outcomes $\mathcal S = \{ (i_1, x_1, y_1), \dots, (i_k, x_k, y_k) \}$ such that $S = \{ i_1, \dots, i_k \} \in \mathcal{I}$ for distinct $i_1, \dots, i_k$, we call $\mathcal S$ a \emph{solution}. In general, we denote the set of all possible outcomes as $\Omega = \{ (i, x, y) : (x, y) \in \supp(\mu_i), i \in E \}$ and the set of all solutions with respect to the constraint $\mathcal I$ as $\Omega_{\mathcal I} \subseteq 2^\Omega$.

Given such an instance as described above, the principal and agent play an asymmetric game in which the principal alone has the power to choose the mechanism and accept a solution, and the agent alone has the power to search for solutions. More specifically, in order to learn about the true realization $(X_i, Y_i)$ of an element $i$, the agent can \emph{probe} element $i$. We allow them to probe elements adaptively, choosing what to probe next based on previously realized outcomes. Let us say that the agent ultimately probes the set $\prob \subseteq E$, obtaining outcomes $T$. Depending on the mechanism, they can choose to share information about $T$ with the principal. The principal can \emph{accept} any valid solution $S \subseteq T$, yielding a net utility of $\sum_{(i, x, y) \in S} x - \sum_{i \in \prob} c_i$ for the principal and $\sum_{(i, x, y) \in S} y - \sum_{i \in \prob} c_i$ for the agent. The principal can alternatively choose to \emph{reject} all solutions and maintain the status quo, yielding a net utility of $- \sum_{i \in \prob} c_i$ for both players. Both players have common knowledge of the setup of the problem, including all distributions $\{ \mu_i \}_{i \in E}$ but excluding the true realizations of elements, and they each act to maximize their own expected utility.

As in the models from previous work, we assume that the agent cannot lie by misrepresenting the utilities of a probed outcome or by claiming to have probed an unprobed element. We believe that this is a natural assumption in many settings where outcomes can be easily verified by the principal. Additionally, we assume that the principal has commitment power, i.e. the agent can trust the principal to follow the rules of whatever mechanism they choose. The principal can force the agent to also follow the rules of the mechanism insofar as they can detect violations of the rules. Finally, we also assume that all instances of this problem satisfy $\E[X_i] > c_i$ and $\E[Y_i] > c_i$ for all $i\in E$. The first assumption is without loss of generality, since $\E[X_i] \le c_i$ would imply that the principal has no incentive to probe or accept element $i$, so the agent would not probe it either. The second assumption allows us to avoid uninteresting impossibilities for the delegation gap defined in Section \ref{delegation-gap}, since $\E[Y_i] \le c_i$ would imply that the agent has no incentive to probe or propose element $i$ but the principal may still be able to receive a lot of utility from element $i$.

For this paper, we're interested in \emph{single-proposal} mechanisms as defined in \cite{kleinberg2018delegated} and used in \cite{bechtel2020delegated}. A single proposal mechanism consists of an \emph{acceptable set} $\mathcal{R} \subseteq \Omega_{\mathcal I}$ containing all solutions that the principal is willing to accept. In such a mechanism, the principal starts by declaring their choice of $\mathcal{R}$. The agent responds by adaptively probing any set of elements $\prob \subseteq E$ of their choosing, receiving the set of outcomes $T = \{ (i, X_i, Y_i) : i \in \prob \}$. Once they are done probing, they can \emph{propose} some valid solution $\mathcal S \subseteq T$ to the principal. Finally, the principal can either accept or reject the solution $\mathcal S$. If $\mathcal S$ is not a valid solution, $\mathcal S$ contains misrepresentations of the truth, or $\mathcal S \notin \mathcal{R}$, then the principal must reject $\mathcal S$. We note that this mechanism is deterministic in the sense that the principal chooses a deterministic $\mathcal{R}$ and their response to the agent's choices is deterministic. This is in contrast to the randomized mechanisms discussed briefly after Theorem \ref{thm:efficient_delegation_from_OCRS} and lottery mechanisms as defined in Appendix \ref{lottery-mechanisms}.

Given element $i$, we define the \emph{cap value} or \emph{surplus value} for the principal $\tau^x_i$ as the solution to $\E[(X_i - \tau^x_i)_+] = c_i$. We further define \emph{truncated random variables} $Z^{\min}_i = \min \{ X_i, \tau^x_i \}$ for the principal for all $i\in E$.
We similarly define agent's cap values $\tau^y_i$ as the solution to $\E[(Y_i - \tau^y_i)_+] = c_i$, and the truncated random variable for the agent as $W^{\min}_i = \min \{ Y_i, \tau^y_i \}$. Note that the expected utility (including the probing cost) of a particular element is negative for the elements with negative cap values. We sometimes drop superscript from the principal's cap values and denote $\tau_i^x$ as $\tau_i$ for $i\in E$ whenever it is clear.

\subsection{Delegation Gap}
\label{delegation-gap}

As in \cite{bechtel2020delegated,kleinberg2018delegated}, we are not interested in finding optimal delegation mechanisms so much as finding delegation mechanisms that approximate the principal's optimal \emph{non-delegated} utility. The optimal non-delegated utility refers to the principal's optimal utility when delegating to an agent who shares their interests (alternatively, their optimal utility when they act as both the principal and agent, i.e. they have the power to probe elements and accept solutions). Note that the non-delegated problem that the principal faces is exactly the generalized Pandora's box problem with a downward-closed constraint. Therefore, our main model is a delegated version of this problem, hence why we call it the delegated Pandora's box problem.

Let $\opt$ be the principal's optimal non-delegated utility. Singla \cite{singla2018price} shows that for any downward closed constraint $\mathcal I$,
\begin{equation*}
    \opt \leq \E\left[\max_{S\in \mathcal I}\sum_{i\in S} Z_i^{\min}\right].
\end{equation*}
Let $\E[\del_{\mathcal R}]$ be the expected utility of the delegating principal with single-proposal mechanism $\mathcal R$, i.e. the expected utility of the principal who delegates with acceptable set $\mathcal R$ to an agent who acts in order to maximize their own expected utility given $\mathcal R$.

Now, we define $\alpha$-factor delegation strategies, which guarantee the principal at least an $\alpha$-factor of $\opt$ when they delegate.
\begin{definition}
	Fix an instance of the delegated Pandora's box problem. We say that a mechanism $\mathcal{R}$ is an $\alpha$-\emph{factor delegation strategy} for $\alpha \in [0, 1]$ if
	\begin{equation*}
		\E[\del_{\mathcal R}] \geq \alpha \cdot \opt.
	\end{equation*}
	Moreover, we say $\mathcal R$ is an $\alpha$-factor \emph{agent-agnostic} strategy if $\E[\del_{\mathcal R} ]\geq \alpha \cdot \opt$ for all instances with the same costs and marginal distributions of the principal's values $\{X_i\}_{i \in E}$, regardless of the distribution of the agent's values $\{Y_i\}_{i \in E}$.
\end{definition}
We sometimes refer to $\alpha$-factor strategies as $\alpha$-delegation and $\alpha$-factor agent-agnostic strategies as $\alpha$ agent-agnostic delegation. Note that if $\alpha$-factor agent-agnostic strategies exist for the principal, then the principal can obtain an $\alpha$-factor of $\opt$ even when they do not have any information about the distribution of $\{Y_i\}_{i \in E}$.

Now, we define the delegation gap of the family of instances of delegated Pandora's box.
\begin{definition}
	The \emph{delegation gap} of a family of instances of delegated Pandora's box is the minimum, over all instances in the family, of the maximum $\alpha$ such that there exists an $\alpha$-factor strategy for that instance. This gap measures the minimum fraction of the principal's non-delegated utility they can achieve when delegating optimally. We similarly define the agent-agnostic delegation gap for agent-agnostic delegation.
\end{definition}

\subsection{More General Mechanisms}
\label{other-mechanisms}

Having now defined our model and the space of single-proposal mechanisms, it is natural to ask about the power and generality of such mechanisms. It might be beneficial for the principal to consider a larger class of mechanisms that have, for example, more signals to choose from and multiple rounds of communication. However, as in previous work on delegation and similar mechanism-design problems, we argue that that any \emph{multi-round signaling mechanism} can be equivalently implemented by a single-proposal mechanism. This allows us to consider only single-proposal mechanisms without loss of generality. Since this type of argument is similar to the revelation principle and is very common in the literature \cite{alonso2008optimal, armstrong2010model, bechtel2020delegated, kleinberg2018delegated}, we will include only an informal sketch here.

Consider any multi-round signaling mechanism $M$. We will construct a single-proposal mechanism $S$ that simulates $M$. In $S$, the principal commits to accepting any solution that they could accept when both players follow $M$. Since the agent following $M$ can predict this set of acceptable solutions and the sequence of probes and signals leading to any such solution, they can act in a way that optimizes their expected utility given the solutions that the principal would accept. Therefore, the agent responding to $S$ can do no better than following the same such optimal sequence of probes and then proposing whichever solution the principal would have accepted under $M$. Since they can do just as well under $S$ and have no reason to deviate from the optimal strategy of $M$, these mechanisms are equivalent.

We note here that this argument applies to deterministic mechanisms. Lottery mechanisms as defined in Appendix \ref{lottery-mechanisms} could have strictly more power than their deterministic counterparts.

\subsection{Model Variants} \label{sec:model_variants}
\label{model-variants}

In this paper, we consider a few different variants of the model and approximation measure as defined above. The first such variant, called the \emph{binary model}, is just a special case of delegated Pandora's box in which the distribution $\mu_i$ of every element $i$ has support for exactly two outcomes: $\bot = (i, 0, 0)$ and $\omega_i = (i, x_i, y_i)$. A simpler version of this model in which the inner constraint is a $1$-uniform matroid was investigated in \cite{kleinberg2018delegated}, and we extend their definition to general matroid inner constraints. As motivation for this model, we consider search problems in which the principal and agent know the full space of possible outcomes but don't know which of those outcomes are feasible. However, they both share a prior probability on the feasibility of each outcome, all outcomes are mutually independent, and the agent can check the feasibility of any element by paying a probing cost. This is also an extension of prior work as described in the introduction.

Second, we consider the \emph{free-agent model}. This model changes only the utility of the agent such that they do not pay the cost of any probed elements. In order to ensure that the agent does not probe all elements and incur too a large cost for the principal, we assume that the agent breaks ties in favor of the principal when deciding what element to probe next. Therefore, if the principal doesn't accept any outcomes from a particular element, then they know that the agent will not probe that element. We motivate this model both by negative results in the standard model and by settings in which the principal is constrained in advance to cover all costs that the agent may incur, e.g. an employer that commits to reimbursing employees for all work-related costs.

Third, we consider \emph{discounted-cost approximations}, a new measure of approximation for delegated Pandora's box problems. Given an instance $I$ of any model of delegated Pandora's box and some \emph{discount factor} $\delta$, consider a new instance $J$ identical to $I$ except that the cost of each element $i$ is $(1 - \delta) c_i$, where $c_i$ is the original cost. 
\begin{definition}
    We say that a mechanism $\mathcal R$ is an $(\alpha,\delta)$-factor delegation strategy if the principal's delegated utility in the $\delta$-discounted instance $J$ is at least an $\alpha$-factor of their non-delegated utility in the original instance $I$.
\end{definition}
Observe that this is a bi-criteria approximation in which we aim to minimize $\delta$ and maximize $\alpha$. This approximation measure can be used as a means of determining how far the principal's costs are from being able to achieve a constant delegation gap. We additionally motivate it by settings in which the agent pays a smaller cost for searching than the principal would, e.g. a contractor which, through prior experience or economies of scale, is able to save on costs and share these savings with the contractee.

Finally, we consider the \emph{shared-cost model}. This model considers a fixed cost to probe each element that the principal can pay alone or share with the agent. In particular, it allows the principal to set the agent's cost $c'_i$ for element $i$. These costs are announced to the agent along with the acceptable set $\mathcal{R}$. Then, if the agent probes element $i$, they pay a cost of $c'_i$ and the principal pays the remaining cost for that element, i.e. $c_i - c'_i$. To avoid direct transfers of value between the principal and agent, the principal can only choose $0 \le c'_i \le c_i$ so that both costs are nonnegative. We briefly observe that there are instances of this model for which the principal's optimal delegated utility is strictly greater than their optimal non-delegated utility. This is easy to see by considering any instance for which $X_i = Y_i$ for all elements $i$: the principal can set $c'_i = c_i$ and have the agent run their optimal non-delegated strategy while they do not pay any of the costs. Therefore, the delegation gap $\alpha$ of such instances can be greater than $1$. We introduce this model in the hopes that the principal's increased power can lead to better approximations. Furthermore, this model resembles settings in which the principal can choose different reimbursement amounts for each of the agent's actions, but is unable to reimburse more than the true cost (no direct transfers).

\section{Standard Model Delegation}
\label{standard-model}

 In this section, we consider the delegation gap of the standard model of the delegated Pandora's box problem. We start by looking at the binary model special case, and show that this model has constant-factor delegation gaps for matroid constraints. Then, in Section \ref{standard-model-impossibility}, we show that the standard model (without binary assumption on $\mu_i$) does not admit constant delegation gaps in general, even for the rank one matroid. Before getting to the main result for this model, we analyze the (non-delegated) Pandora's box problem with exogenous order as discussed in \cite{kleinberg2018delegated} for rank one matroids, and extend their result to more general constraints.

\subsection{Non-delegated Generalized Pandora's Box with Exogenous Sequence}
\label{binary-model-preliminaries}

Consider a variant of the generalized Pandora's box problem, which we will call \emph{generalized Pandora's box with exogenous order}, in which the searcher is limited to consider elements in an order that is specified in advance a part of the instance. For each element in this order, the searcher can choose to skip the element without probing, or probe the element and either accept or reject based on the realization. Once the searcher makes a decision about the current element, they cannot undo this decision. This is an extension of a similarly-named model from \cite{kleinberg2018delegated}. We now define the threshold strategy for Pandora's box problem with exogenous ordering. Recall that the cap value $\tau_i$ for an element $i$ is defined by $\E[(X_i - \tau_i)^+] = c_i$, where $X_i$ is the random value of the element and $c_i$ is its cost.

\begin{definition}[Threshold Strategy $(\mathcal A, \{\tau_i\} , \{X_i\})$] \label{def:threshold_picking}
	Given a downward-closed family of solutions $\mathcal A$, the \emph{threshold strategy} defined by $\mathcal A$ functions as follows: Consider the searcher who has already accepted outcomes $\mathcal S = \{ (i_1, x_1), \dots (i_k, x_k) \}$ and is deciding what to do about element $i$. They should probe element $i$ if and only if $\mathcal S \cup (i, \tau_i) \in \mathcal A$. Furthermore, they should accept element $i$ if and only if $\mathcal S \cup (i, X_i) \in \mathcal A$.
\end{definition}

With this type of strategy in mind, we can extend the approximation of this problem from rank one matroids in \cite{kleinberg2018delegated} to more general downwards closed constraints. Lemma \ref{lem:reduction-pandora-to-prophet}, which is a corollary of \cite[Theorem 5]{esfandiari2019online}, provides a reduction from generalized Pandora's box with exogenous ordering for arbitrary downwards closed constraints to adversarial greedy prophet inequalities.

\begin{lemma}\label{lem:reduction-pandora-to-prophet}
	Let $J$ be an instance of the generalized prophet inequality problem with random variable $Z_i^{\min}=\min\{X_i,\tau_i\}$ for all $i \in E$ and constraint $\mathcal I$. If there exists an $\alpha$-factor greedy monotone strategy for $J$ against the almighty adversary, then there exists an $\alpha$-factor threshold strategy for the Pandora's box instance $I = (E, \{X_i\}, \Iin, \{c_i\})$ with exogenous ordering.
\end{lemma}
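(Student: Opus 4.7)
The plan is to instantiate the Pandora's box threshold strategy using the same thresholds $\{t_i\}$ and downward-closed system $\mathcal{I}_t$ that describe the greedy monotone strategy for the prophet instance on $Z_i^{\min}$, and to show the two strategies couple cleanly on common realizations so that they yield identical expected utilities. Combined with Singla's bound $\opt \le \E[\max_{S \in \mathcal I}\sum_{i\in S} Z_i^{\min}]$ and the observation that the almighty-adversary guarantee covers any fixed exogenous order as a special case, this chain will yield the claimed $\alpha$-factor threshold strategy.

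First I would reduce to the case $t_i \le \tau_i$ for all $i$: if $t_i > \tau_i$ then $Z_i^{\min}$ is almost surely below $t_i$, so the prophet strategy never accepts $i$ anyway, and we may drop $i$ from both sides without loss. Under this reduction, $\{X_i \ge t_i\}$ and $\{Z_i^{\min} \ge t_i\}$ coincide as events, which lets me couple the two algorithms on a common realization and prove by induction along the exogenous order that they accept the same set $S$. In particular, Pandora's probing condition $\tau_i \ge t_i$ collapses into the feasibility check, so $\prob$ becomes exactly the set of elements whose feasibility-in-$\mathcal{I}_t$ check passes at arrival.

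The substantive step is the element-wise identity $\E[X_i\,\mathbbm{1}[X_i \ge t_i]] - c_i = \E[Z_i^{\min}\,\mathbbm{1}[Z_i^{\min} \ge t_i]]$. Writing $X_i - Z_i^{\min} = (X_i - \tau_i)^{+}$ and observing that $(X_i - \tau_i)^{+}$ is supported on $\{X_i > \tau_i\} \subseteq \{X_i \ge t_i\}$, the indicator becomes redundant and the defining equation $\E[(X_i - \tau_i)^{+}] = c_i$ closes the gap. Because the probe/feasibility indicator at step $i$ depends only on outcomes preceding $i$ in the exogenous order and is therefore independent of $X_i$, this per-element identity aggregates, via linearity of expectation, into $\E[\sum_{i\in S} X_i - \sum_{i \in \prob} c_i] = \E[\sum_{i \in S} Z_i^{\min}]$.

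The main obstacle I anticipate is the mismatch in what the two algorithms observe: Pandora probes using the deterministic cap $\tau_i$ but accepts using the realized $X_i$, whereas the prophet only ever sees $Z_i^{\min}$. The WLOG reduction $t_i \le \tau_i$ is precisely what aligns these two decision processes element by element, and it is also what makes Pandora's expected probing cost match in expectation the prophet's ``capping'' loss on accepted elements with $X_i > \tau_i$; once this alignment is in place, the rest is a short calculation together with the inequality $\opt \le \E[\max_{S\in\mathcal I}\sum_{i\in S} Z_i^{\min}]$ from Singla.
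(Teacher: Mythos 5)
Your proposal is correct and follows essentially the same route the paper takes: the paper's stated proof defers to a citation (Theorem 5 of Esfandiari et al.), but the underlying argument is exactly your coupling --- an induction along the exogenous order showing both strategies accept the same set, the per-element identity $\E[Z_i^{\min}\mathbbm{1}[Z_i^{\min}\ge t_i]] = \E[X_i\mathbbm{1}[X_i\ge t_i]] - c_i$ via $\E[(X_i-\tau_i)^+]=c_i$, the observation that elements with $t_i>\tau_i$ contribute zero to both sides, and the final appeal to Singla's bound $\opt \le \E[\max_{S\in\mathcal I}\sum_{i\in S}Z_i^{\min}]$. No gaps.
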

\begin{proof}
    Corollary of Theorem 5 from \cite{esfandiari2019online}.

\end{proof}


\subsection{Binary Model: Efficient Delegation for Matroids}
\label{binary-model}
Singla \cite{singla2018price} proposes an optimal strategy for Pandora's box with a matroid constraint that can be simplified in the binary setting as follows: probe elements one by one starting from the element with the maximum cap value. Given currently selected elements $S$, probe the next element with the maximum cap value $i$ such that $S\cup i \in \mathcal I$. After probing the element $i$, select $i$ if and only if $X_i>0$.

Consider the binary delegated Pandora's box instance for constraint $\mathcal I$ where the distributions $\mu_i$ of every element $i \in E$ has support on exactly two outcomes: $\bot = (i, 0, 0)$ and $\omega_i = (i, x_i, y_i)$. In the following Theorem, we show that the principal can design a $1/4$-factor strategy $\mathcal R$ for the standard delegation model for $\mu_i$ with binary support and a matroid constraint. The key idea is to use the reduction from Pandora's box with an exogenous order to prophet inequalities as described in Lemma~\ref{lem:reduction-pandora-to-prophet}.
\begin{theorem}
  	There exists a $1/4$-factor strategy for the binary model of delegated Pandora's box with a matroid constraint.
\end{theorem}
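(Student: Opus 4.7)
The plan is to reduce the theorem to the $\frac{1}{4}$-factor matroid prophet inequality against the almighty adversary from \cite{feldman2016online}. First, instantiate Feldman et al.'s greedy monotone strategy on the prophet-inequality instance whose random variables are the truncated values $Z_i^{\min} = \min\{X_i, \tau_i^x\}$ and whose constraint is $\mathcal I$; this yields thresholds $\{t_i\}_{i\in E}$ and a downward-closed subfamily $\mathcal I_t \subseteq \mathcal I$. Then, define the principal's acceptable set to be
\[
    \mathcal R = \bigl\{\mathcal S \in \Omega_{\mathcal I} : \{i : (i,x,y) \in \mathcal S\} \in \mathcal I_t \text{ and } x \ge t_i \text{ for every } (i,x,y) \in \mathcal S\bigr\}.
\]

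Next I would argue that in the binary model the agent's best response against $\mathcal R$ is equivalent to a greedy monotone strategy executed on the (possibly adaptive) order in which the agent probes elements. Two features of the binary setting drive this: (i) each distribution $\mu_i$ places all its mass on $\bot$ and on a single outcome $\omega_i = (i, x_i, y_i)$, so probing merely reveals whether $i$ is realized and then commits the fixed pair $(x_i, y_i)$; and (ii) the standing assumption $\E[Y_i] > c_i$, together with $y_i \ge 0$, implies that the agent strictly prefers to probe every element whose outcome could still be added to a feasible set in $\mathcal I_t$, and weakly prefers to include every realized outcome in their final proposal. Consequently, the set probed and proposed by the agent under the mechanism coincides with what a greedy monotone gambler with thresholds $\{t_i\}$ and feasibility family $\mathcal I_t$ would produce, for the arrival order determined by the agent.

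With this equivalence in hand, applying Lemma~\ref{lem:reduction-pandora-to-prophet} finishes the proof: the principal's delegated utility matches the utility of the corresponding threshold strategy on the Pandora's box instance with exogenous order equal to the agent's probing order, and Feldman et al.'s matroid prophet inequality guarantees this value is at least $\frac{1}{4}\opt$ against every arrival order, including the adaptive ones that an almighty adversary (or here, the agent) may choose. The main obstacle I expect lies in justifying the equivalence of step two for general matroids: the agent in principle selects a max-$y$ feasible subset of their realized probes, which need not a priori coincide with the greedy monotone's $x$-threshold acceptance rule. Resolving this requires exploiting the particular OCRS-based construction underlying Feldman et al.'s $\mathcal I_t$, whose downward-closed structure ensures that any realized element the greedy monotone would have accepted can also be incorporated into the agent's optimal proposal, thereby aligning the $y$-driven and $x$-driven selections and preserving the threshold-strategy accounting that yields the $\frac{1}{4}\opt$ lower bound.
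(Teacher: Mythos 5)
Your overall strategy is the same as the paper's: instantiate the $1/4$-factor greedy monotone strategy of Feldman et al.\ on the truncated variables $Z_i^{\min}$, turn its thresholds $\{t_i\}$ and family $\mathcal I_t$ into the acceptable set $\mathcal R$, argue that the agent's best response implements the corresponding threshold strategy on their own probing order, and invoke Lemma~\ref{lem:reduction-pandora-to-prophet} against the almighty adversary. The reduction and the accounting are right, and you correctly identify the one nontrivial step.

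However, your resolution of that step is where a genuine gap remains. You claim the agent ``strictly prefers to probe every element whose outcome could still be added'' and ``weakly prefers to include every realized outcome,'' and attribute the alignment of $y$-driven and $x$-driven selection to the ``downward-closed structure'' of $\mathcal I_t$. Downward-closedness alone does not give you this: for a general downward-closed $\mathcal I_t$, greedily accepting a realized element can block more valuable future elements, so the agent's optimal policy need not be greedy, and your equivalence would fail. What actually closes the gap in the paper is that Feldman et al.'s construction for matroids yields an $\mathcal I_t$ that is itself a matroid; hence the agent, who under $\mathcal R$ faces a bona fide Pandora's box instance on $E' = \{i : \tau_i^x \ge t_i\}$ with matroid constraint $\mathcal I_t^{E'}$, has as their \emph{optimal} strategy Singla's greedy rule (probe addable elements in decreasing order of $\tau_i^y$, accept iff $Y_i \ge \tau_i^y$). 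Only then does the binary structure kick in: since the sole nonzero outcome is $\omega_i$ and $t_i \le x_i$, $\tau_i^y \le y_i$, the events $\{Y_i \ge \tau_i^y\}$ and $\{X_i \ge t_i\}$ both coincide with ``element $i$ is realized,'' so the agent's self-interested greedy acceptance is exactly the principal's threshold rule. You should also make explicit the paper's restriction to $E'$ and the normalization $0 < t_i \le x_i$, without which the acceptance events need not coincide. With the matroid optimality of the agent's greedy strategy substituted for your appeal to downward-closedness, the argument goes through as in the paper.
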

\begin{proof}
	Take an instance of the binary model with elements $E$ such that for all $i \in E$, we have $(X_i,Y_i)=(x_i,y_i)$ with probability $p_i$ and $(X_i, Y_i) = (0, 0)$ otherwise. Consider a $1/4$-approximate greedy monotone strategy, as proposed in \cite{feldman2016online}, for the prophet inequality instance with random variables $Z_i^{\min} = \min\{X_i,\tau^x_i\}$ for all $i \in E$ and matroid constraint $\mathcal I$ against the almighty adversary. This strategy is defined by thresholds $t = \{t_i\}_{i \in E}$ and a matroid constraint $\mathcal I_{t}\subseteq \mathcal I$. Given any order of arrival of elements, the gambler selects element $i$ if and only if $Z^{\min}_i \geq t_i$ and the set of all accepted elements (including element $i$) is contained in $\mathcal I_{t}$. Without loss of generality, we assume that $t_i$ is such that $0 < t_i \le x_i$ for all $i \in E$. This is because the gambler has no incentive to accept elements of value $0$ and $\tau^x_i < x_i$ due to the assumption $\E[X_i]>c_i$.
	
	Given thresholds $\{ t_i \}_{i \in E}$, the principal restricts the agent to elements in the set $E'= \{i\in E: \tau^x_i \geq t_i\}$. Let $\mathcal I_{t}^{E'}$ be the matroid constraint obtained by restricting $\mathcal I_{t}$ to the set of elements $E'\subseteq E$. We can describe the gambler's greedy monotone strategy as $\mathcal A = \{\{(i,z_i):i \in S \land z_i\geq t_i \}: S\in \mathcal I^{E'}_t\}$. Now, we define the principal's single proposal mechanism as follows:
	\begin{equation*}
	    \mathcal R = \{\{(i,x_i,y_i): i\in S \}: S\in \mathcal I^{E'}_{t} \text{ and } x_i \geq t_i ~\forall i\in S\}.
	\end{equation*}
	
 	For all $i\in E'$, $\mu_i$ has binary support, so $Y_i \geq \tau^y_i$ implies that $X_i \geq t_i$, where $\tau_i^y$ is the agent's cap value for element $i$ satisfying $\E[(Y_i - \tau^y_i)_+]=c_i$. Given this set of acceptable solutions $\mathcal R$, the agent faces an instance of Pandora's box on the set of elements $E'$ with matroid constraint $\mathcal I^{E'}_{t}$. Therefore, the agent's optimal strategy can be described as follows \cite{singla2018price}: given the current set of accepted elements $S\subseteq E'$ with $S\in \mathcal I^{E'}_{t}$, probe an element $i \in E' \setminus S$ such that $S \cup i \in \mathcal I^{E'}_{t}$ and $\tau^y_i$ is maximal. Then they will accept element $i$ if and only if $Y_i \geq \tau_i^y$, which is equivalent to selecting element $i$ if and only if $X_i \geq t_i$. Thus, the agent simply implements the threshold strategy $(\mathcal A,\{\tau_i\},\{X_i\})$ for the principal's Pandora's box instance with exogenous order equal to their probing order. Therefore, by Lemma~\ref{lem:reduction-pandora-to-prophet}, we conclude that the principal's expected delegated utility $\E[\del_{\mathcal R}] \geq 1/4 \cdot \opt$.
\end{proof}

\subsection{Standard Model Impossibility}
\label{standard-model-impossibility}

Now we will consider the standard model of delegated Pandora's box and show that this problem does not have constant-factor delegation gaps in general, even for rank one matroid constraints. In Proposition \ref{prop:impos1}, we present a family of instances of delegated Pandora's box for which the delegation gap is $O(1/n)$ where $n$ is the number of elements. The main challenge in this model is when the agent pays to probe, the principal needs to construct their acceptable set $\mathcal R$ such that the agent has an incentive to probe all desirable elements. For example, consider an element $i$ for which $c_i = 1 / \sqrt n$, $X_i = n$ with probability $1 / n$ and otherwise $X_i = 0$, and $Y_i = n$ independently with probability $1 / n$ and otherwise $Y_i = 0$. In this case, if the principal only accepts the outcome $X_i = n$, then the agent will not probe element $i$ because their expected utility from probing is $n \times \Pr[X_i=n] \Pr[Y_i=n] - 1 / \sqrt n < 0$ for $n > 1$. In order to ensure that the agent probes such elements, the principal might have to accept undesirable outcomes where $X_i = 0$. Hence, if there are multiple such elements then the principal ends up accepting unwanted outcomes with a high probability that leads to $O(1/n)$ delegation gap. The following Proposition shows the claim formally.

\begin{proposition}\label{prop:impos1}
	There exist instances of the standard model of delegated Pandora's box on $n$ elements for which the delegation gap is $O(\frac{1}{n})$.
\end{proposition}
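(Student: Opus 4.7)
The plan is to exhibit the family of instances implicit in the discussion preceding the proposition and show its delegation gap is at most $O(1/n)$. Let $|E| = n$, take $\mathcal I$ to be the rank-one uniform matroid, set $c_i = 1/\sqrt{n}$, and let $X_i, Y_i$ be independent, each equal to $n$ with probability $1/n$ and to $0$ otherwise. The assumption $\E[X_i], \E[Y_i] > c_i$ holds for $n \ge 2$, since both expectations equal $1 > 1/\sqrt{n}$. I would first lower bound $\opt = \Theta(n)$ by running Weitzman's algorithm: the cap $\tau_i = n - \sqrt{n}$ (from $\E[(X_i - \tau_i)_+] = c_i$) is the same for all $i$, so the optimal strategy probes sequentially and stops as soon as some $X_i = n$; this gives expected reward $n(1 - (1-1/n)^n)$, expected cost $\sqrt{n}(1 - (1-1/n)^n)$, and net $\opt = (n - \sqrt{n})(1 - (1-1/n)^n) = \Theta(n)$.

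The main step is to argue $\max_{\mathcal R} \E[\del_{\mathcal R}] \le 0$. For each $i$, write $R_i := \{(x, y) \in \{0, n\}^2 : (i, x, y) \in \mathcal R\}$. The agent's expected marginal gain from probing $i$ is bounded by $n \cdot \Pr[Y_i = n \text{ and } (X_i, Y_i) \in R_i]$ minus $c_i$. If $(0, n) \notin R_i$, the only $y = n$ outcome in $R_i$ is $(n, n)$ of probability $1/n^2$, so the marginal is at most $1/n - 1/\sqrt{n} < 0$ for $n > 1$ and the agent declines to probe (outcomes $(n, 0)$ give the agent $y = 0$, and under the convention that a weakly indifferent agent does not propose, they do not contribute). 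Let $A := \{i : (0, n) \in R_i\}$. If $A = \emptyset$, the agent never probes and $\E[\del_{\mathcal R}] = 0$. Otherwise, since the marginal gain $1 - 1/\sqrt{n}$ of probing an element of $A$ is Markov-stationary (unchanged by previous failures), the agent adaptively probes elements of $A$ one-by-one, stopping on the first $i$ with $Y_i = n$, for an expected number of probes of $n(1 - (1 - 1/n)^{|A|})$.

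The key reward bound now uses the independence of $X_i$ and $Y_i$: $\E[X_i \mid Y_i = n] = \E[X_i] = 1$, so the principal's expected reward from the (at most one) proposal is $1 - (1-1/n)^{|A|}$, while the principal's expected probing cost is $(1/\sqrt{n}) \cdot n(1 - (1-1/n)^{|A|}) = \sqrt{n}(1 - (1-1/n)^{|A|})$. This yields net delegated utility $(1 - \sqrt{n})(1 - (1-1/n)^{|A|}) \le 0$ for $n \ge 2$, so the principal can do no better than $\mathcal R = \emptyset$, giving $\E[\del_{\mathcal R}] = 0$. Combined with $\opt = \Theta(n)$, the delegation gap is at most $0 \in O(1/n)$. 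The main obstacle I expect is pinning down the tie-breaking convention and ensuring the case analysis on $R_i$ is exhaustive: specifically, mechanisms that simultaneously include outcomes of varying $X$-values must be argued to collapse, via the agent's optimal proposal rule, to essentially the two cases ``probe and propose only when $Y_i = n$'' versus ``do not probe'', which together with the independence of $X_i$ and $Y_i$ is what makes every nonempty $\mathcal R$ suboptimal for the principal.
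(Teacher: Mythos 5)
Your instance is the illustrative one from the discussion \emph{preceding} the proposition, not the one the paper actually analyzes, and it does not have delegation gap $O(1/n)$ under the model's tie-breaking convention. The paper (here and in all of its impossibility arguments) lets the agent break ties \emph{in favor of the principal}: once the agent has exhausted the elements worth probing and is indifferent among acceptable proposals (all worth $0$ to them), they propose the one maximizing the principal's utility. Your bound $\max_{\mathcal R}\E[\del_{\mathcal R}]\le 0$ rests entirely on the opposite convention (``a weakly indifferent agent does not propose''), and it fails under the paper's convention. Concretely, let $\mathcal R$ accept every outcome $(i,x,y)$ with $x=n$ or $y=n$. The agent probes elements until finding some $Y_i=n$ (their cap value $n-\sqrt n$ is positive, so they never stop early); with probability $(1-\tfrac1n)^n\approx 1/e$ they find none, having probed all $n$ elements at a total cost to the principal of only $n\cdot\tfrac{1}{\sqrt n}=\sqrt n$, and then, breaking ties favorably, they propose an outcome with $X_i=n$, which exists independently with probability $1-(1-\tfrac1n)^n$. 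This single event already gives $\E[\del_{\mathcal R}]\ge \bigl(1-\tfrac1n\bigr)^n\cdot n\bigl(1-(1-\tfrac1n)^n\bigr)-\sqrt n=\Theta(n)=\Theta(\opt)$, so your instance has a \emph{constant} delegation gap. The root cause is that the probing cost $1/\sqrt n$ is far too small relative to $\E[X_i]=1$: the principal is happy to bankroll all $n$ probes.

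The paper's instance is engineered to close exactly this loophole. It sets $c_i=1-\varepsilon$ with $\varepsilon\le \tfrac{1}{2n}$, so each probe nearly cancels the principal's expected gain of $\E[X_i]=1$ and leaves only $O(\varepsilon)$ net per probe; probing many elements ``speculatively'' is then ruinous, and indeed the term $n(1-(1-\tfrac1n)^k)-k(1-\varepsilon)$ is nonpositive for $k>2$. It also gives the agent value $M\ge n/\varepsilon$ with probability $1/M$, so the agent must still be bribed with $(i,0,M)$ outcomes to probe at all, yet quits early on such a worthless-to-the-principal outcome only with probability $\le\varepsilon$. Balancing the two events caps the delegated utility at $\varepsilon+\varepsilon^2$ against $\opt\ge\varepsilon n(1-1/e)$. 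Your reduction of the agent to ``probe $A$, stop at the first $Y_i=n$'' and the use of independence to get $\E[X_i\mid Y_i=n]=1$ are fine as far as they go, but without handling the favorable-tie-breaking case --- and without an instance whose costs make that case unprofitable --- the argument does not establish the proposition.
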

\begin{proof}
	For any positive integer $n > 1$ and real $0 < \varepsilon \le \frac{1}{2n}$, let $M$ be a positive integer such that $M \ge n / \varepsilon$ and consider the following instance of delegated Pandora's box. We have $n$ identical elements $E = \{1, \dots, n\}$ where each element $i$ has a probing cost $c_i = 1 - \varepsilon$ and random utilities $(X_i, Y_i) \sim \mu_i$. The principal's utility $X_i$ is $n$ with probability $\frac{1}{n}$ and $0$ otherwise. The agent's utility $Y_i$ is $M$ with probability $\frac{1}{M}$ independently of $X_i$ and $0$ otherwise. The constraint is a $1$-uniform matroid. We let the agent break ties in favor of the principal.
	
	First, we will determine the principal's optimal non-delegated expected utility. This is given by the solution to Weitzman's Pandora's box problem. For each element $i$, we must determine the cap value $\tau_i$ such that $\E (X_i - \tau_i)^+ = c_i$. It's not hard to verify for this instance that $\tau^x_i = \varepsilon n$. Then the optimal solution guarantees an expected utility of $U = \E \max_i \min(X_i, \tau_i)$ where each $\min(X_i, \tau_i)$ takes value $\varepsilon n$ with probability $\frac{1}{n}$ and $0$ otherwise. Therefore, $\max_i \min(X_i, \tau_i)$ takes value $\varepsilon n$ with probability $1 - \left( 1 - \frac{1}{n} \right)^n$ and the principal gets expected utility
	\begin{equation*}
		\opt = \varepsilon n \left( 1 - \left( 1 - \frac{1}{n} \right)^n \right) \ge \varepsilon n \left( 1 - \frac{1}{e} \right).
	\end{equation*}
	
	Now, we will bound the principal's delegated expected utility. Consider an arbitrary acceptable set $\mathcal R$ that the principal might commit to. Since the constraint is $1$-uniform, $\mathcal R$ consists of a set of singleton outcomes. Observe that every element $i$ evaluates to one of four tagged outcomes $(i, n, M)$, $(i, n, 0)$, $(i, 0, M)$, and $(i, 0, 0)$ with probabilities $\frac{1}{n M}$, $\frac{1}{n} \left( 1 - \frac{1}{M} \right) $, $\frac{1}{M} \left( 1 - \frac{1}{n} \right)$, and $\left( 1 - \frac{1}{n} \right) \left( 1 - \frac{1}{M} \right)$, respectively.
	
	Given $\mathcal R$, let $E^* \subseteq E$ be the subset of elements $i$ for which $(i, 0, M) \in \mathcal R$ and $(i, n, M) \in \mathcal R$, and let $k = |{E^*}|$. Consider any element $i \notin E^*$. If outcome $(i, 0, M) \notin \mathcal R$, then the agent's increase in expected utility from probing $i$ is at most $M \cdot \frac{1}{M} \left( 1 - \frac{1}{n} \right) - (1 - \varepsilon) = \varepsilon - \frac{1}{n} < 0$, so they have no incentive to ever probe $i$. Similarly, if outcome $(i, n, M) \notin \mathcal R$, then the agent's increase in expected utility from probing $i$ is at most $M \cdot \frac{1}{n M} - (1 - \varepsilon) = \varepsilon - \left( 1 - \frac{1}{n} \right) < 0$. Therefore, the agent will probe no more than the $k$ elements in $E^*$. If $k = 0$, then the agent will not probe anything and both will get $0$ utility. For the remainder of the proof, we assume $k > 0$.
	
	The agent now faces an instance of the Pandora's box problem, so their optimal strategy is to probe elements in order of weakly decreasing cap value (among non-negative cap values) and accept the first outcome whose value is above its cap. For all elements $i \in E^*$, we can calculate that the agent's cap is $\varepsilon M > 0$. Then their optimal strategy is to probe elements from $E^*$ in some order $1, \dots, k$ until a value of $M$ appears, which they will propose. If no value of $M$ appears after probing all of $E^*$, then they will stop probing and choose some outcome to propose. Since all probed outcomes have $0$ utility to the agent, they will choose an outcome to propose that maximizes the principal's utility.
	
	Consider the utility that the principal gets when the agent finds an outcome of value $M$. Among the $k = |{E^*}|$ elements that the agent might probe, they find a value of $M$ with probability
		$1 - \left( 1 - \frac{1}{M} \right)^k \le \frac{k}{M} \le \frac{\varepsilon k}{n} \le \varepsilon.$
	
	Since the principal's utility for the proposed outcome is independent of the agent's, it will have value $n$ for the principal with probability $\frac{1}{n}$. Since $k \ge 1$, the principal pays a cost of $1 - \varepsilon$ for the first probe. Therefore, the principal expects a utility of at most $\varepsilon (\frac{n}{n} - (1 - \varepsilon)) = \varepsilon^2$ in the event when the agent finds an outcome with value $M$.
	
	Now, with probability $\left( 1 - \frac{1}{M} \right)^k \ge 1 - \varepsilon$, the agent doesn't find any outcomes of value $M$. Then the principal pays a cost of $k(1 - \varepsilon)$ in order to probe all $k$ elements in $E^*$. Since the agent breaks ties in favor of the principal, they will propose any acceptable outcomes of value $n$ to the principal. There exists such an outcome with probability at most $1 - \left( 1 - \frac{1}{n} \right)^k$. Therefore, the principal expects a utility of at most
	\begin{equation*}
		n \left(1 - \left( 1 - \frac{1}{n} \right)^k \right) - k(1 - \varepsilon) \le n \left(1 - \left( 1 - \frac{1}{n} \right)^k \right) - k \left( 1 - \frac{1}{2n} \right)
	\end{equation*}
	in the event when the agent does not find an outcome with value $M$. At $k = 1$, this expression evaluates to $\frac{1}{2n} = \varepsilon$. At $k = 2$ it evaluates to $0$. With some calculus and some algebraic manipulations, we can show that this expression is negative for all $k > 2$.
	
	Putting everything together, the principal's delegated expected utility is at most $\varepsilon + \varepsilon^2$, while their non-delegated expected utility is at least $\varepsilon n \left( 1 - \frac{1}{e} \right)$. Therefore, the delegation gap on this instance approaches $\frac{1}{n (1 - 1 / e)} = O(\frac{1}{n})$ as $\varepsilon$ approaches $0$.
 \end{proof}

\section{Free-Agent Model}
\label{free-agent-discounted-cost-models}

The impossibility of constant factor delegation for the standard model, as discussed in Proposition \ref{prop:impos1}, motivates us to design efficient delegation strategies for variants of this model as defined in Section \ref{model-variants}. We observe that the impossibility is aided by the fact that the agent's expected utility for each element is very close to the probing cost, so the principal cannot restrict the agent on any element they want to be probed. An initial attempt to circumvent this failure might design a model where the principal can take on a larger proportion of the probing cost so that they can more freely restrict the agent's behavior. However, the principal's expected utility for each element is similarly close to their probing cost, so they cannot take on a large enough share of the cost without their own expected utility becoming negative.

As a new approach to achieving constant delegation gaps, we will now consider delegation in the free-agent model. Recall that this model removes the agent's probing costs but requires that they always break ties in favor of the principal. This model can be applied in settings where it is standard for the principal to incur the total probing cost. As a simple example, an organization (modeled by the principal) might pay the full travel and lodging expenses associated with interviewing candidates for an available position. The interviewer (agent) can then freely choose to interview (probe) candidates and make recommendations of their own choosing.

We will start by showing that there are constant discounted-cost approximations for this model for any constant discount factor $\delta$ and certain downward-closed constraints.

\subsection{Efficient Delegation for the Free-Agent Model with Discounts}
\label{combined-free-agent-discounted-cost-model}

In Proposition \ref{prop:unif_matroids_freagent}, we propose a $(\delta, \delta')$-factor strategy for $k$-uniform matroid constraints for any $0 \leq \delta \leq 1/2$ and $\delta' \geq \delta$. We show that it is possible to design $\delta$-factor agent-agnostic delegation for the free-agent model with a constant discount factor $\delta' \geq \delta$ on costs for $k$-uniform matroid constraints. Recall that $Z_i^{\min} = \min\{X_i,\tau_i\}$, where $\tau_i$ is the solution to $\mathbb E[(X_i-\tau_i)]=c_i$.  

\begin{proposition} \label{prop:unif_matroids_freagent}
 Let $I$ be an instance of the free-agent model with a $k$-uniform matroid constraint. Then there exists a $(\delta,\delta')$-factor delegation strategy for any $0 \leq \delta \leq 1/2$ and $\delta'\geq \delta$.
\end{proposition}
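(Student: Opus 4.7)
The plan is to reduce to a greedy $c$-selectable OCRS for the $k$-uniform matroid and to exploit the discount $\delta'\geq\delta$ on probing costs to absorb the overhead from the agent's cost-free probing. Concretely, I would choose a fractional vector $x$ in the $k$-uniform matroid polytope and set thresholds $t_i$ with $\Pr[X_i\geq t_i]=x_i$ so that (i) the $1/2$-selectable greedy OCRS of \cite{feldman2016online} applied to the marginals $x$ induces a downward-closed family $\mathcal I_x\subseteq\mathcal I$, and (ii) the corresponding threshold strategy is $\delta$-approximate for $\opt$ via Lemma \ref{lem:reduction-pandora-to-prophet}. Letting $E'=\{i\in E:\tau_i\geq t_i\}$ and $\mathcal I_x^{E'}$ denote $\mathcal I_x$ restricted to $E'$, I would define the single-proposal mechanism
$$\mathcal R=\bigl\{\{(i,x_i,y_i):i\in S\}\ :\ S\in\mathcal I_x^{E'},\ x_i\geq t_i\ \forall i\in S\bigr\}.$$

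In the free-agent model with tie-breaking, the agent probes exactly $E'$ (each probe can strictly improve their expected $Y$-value under $\E[Y_i]>c_i$), yielding the active set $A=\{i\in E':X_i\geq t_i\}$, and proposes $S^*\in\mathcal I_x^{E'}$ with $S^*\subseteq A$ maximizing $\sum_{i\in S^*}Y_i$. The principal's delegated utility in the discounted instance is $U_P=\E[\sum_{i\in S^*}X_i]-(1-\delta')\,c(E')$. To lower-bound $U_P$ by $\delta\cdot\opt$, I would argue three facts: (a) every element of $S^*$ satisfies $X_i\geq t_i$ and $|S^*|=\min(k,|A|)$ shares the same distribution as the threshold strategy's selection $|S^{alg}|$; (b) by Lemma \ref{lem:reduction-pandora-to-prophet} applied to an adversarial probing order that probes every element of $E'$, the threshold strategy satisfies $\E[\sum_{i\in S^{alg}}X_i]-c(E')\geq\delta\cdot\opt$; and (c) the per-element inequality $c_i\leq\E[(X_i-t_i)^+]$ that follows from $\tau_i\geq t_i$ lets the discount-savings term $\delta'c(E')$ absorb the gap between the agent's $Y$-optimal selection and the threshold-strategy selection.

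The main obstacle is controlling the gap between $\E[\sum_{i\in S^*}X_i]$ and $\E[\sum_{i\in S^{alg}}X_i]$ uniformly over adversarial $Y$-distributions (the agent-agnostic requirement). For the $k$-uniform matroid this follows from the symmetric structure of the constraint and the $c$-selectability of the greedy OCRS combined with the threshold floor $X_i\geq t_i$; the condition $\delta\leq 1/2$ precisely matches the $1/2$-selectability of the greedy OCRS from \cite{feldman2016online} for the $k$-uniform matroid. Any residual deficit in $X$-value is offset by the discount $\delta'\geq\delta$ applied to the common probing-cost term $c(E')$, which produces the bicriteria guarantee.
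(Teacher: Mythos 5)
There is a genuine gap in your argument, and it sits exactly where the proposition's content lies. Your plan is essentially the generic OCRS reduction that the paper uses for \emph{general} downward-closed constraints (Theorem \ref{thm:efficient_delegation_from_OCRS}). That reduction gives the principal an $\alpha$ fraction of the value term $\sum_{i}\E[(X_i-t_i)_+]+\sum_i t_i p_i^*$, but the probing cost $(1-\delta')c(E')$ is paid deterministically and in full; to fold it into $\alpha\cdot\opt$ one needs $(1-\delta')\leq\alpha$, i.e.\ discount $\delta'\geq 1-\alpha$. With a $1/2$-selectable OCRS this yields at best a $(1/2,\,\delta'\geq 1/2)$ guarantee, and more generally $(\alpha,1-\alpha)$. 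The proposition claims something incomparable and, for small $\delta$, strictly stronger on the cost side: approximation $\delta$ with discount only $\delta'\geq\delta$ (e.g.\ $\delta=\delta'=0.1$, where the principal still pays $90\%$ of all probing costs). Your step (c) asserts that the per-element inequality $c_i\leq\E[(X_i-t_i)^+]$ lets the savings $\delta'c(E')$ absorb the deficit, but this does not close the accounting: the OCRS only certifies an $\alpha$-fraction of the excess value while the remaining $(1-\delta')c(E')$ of the cost must be covered in full, and nothing in selectability alone bridges that when $1-\delta'>\alpha$. Your remark that ``$\delta\leq 1/2$ precisely matches the $1/2$-selectability'' is a coincidence of constants, not the mechanism behind the condition.

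The paper's proof is a bespoke argument exploiting the symmetry of the $k$-uniform matroid rather than OCRS selectability. It picks one global threshold $T$ calibrated to an order statistic, $\Pr\bigl[\,|\{i:Z_i^{\min}\geq T\}|\geq k\,\bigr]=\delta$, restricts the agent to $S=\{i:\tau_i\geq T\}$, and decomposes by the events $|A|>k$, $1\leq|A|\leq k$, $A=\emptyset$ where $A=\{i\in S: X_i\geq T\}$. The two structural facts doing the work are: (i) the ``base'' value $kT$ is collected with probability exactly $\delta$, matching the $kT$ term in the upper bound $\opt\leq kT+\sum_{i\in S}\E[(X_i-T)_+]-c(S)$; and (ii) whenever $|A|\leq k$ the agent proposes \emph{all} of $A$ regardless of the $Y$-values (there is no reason to drop an above-threshold element in a uniform matroid), so the principal collects the full excess $\sum_i\E[(X_i-T)_+]$ with probability at least $1-\delta$. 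The condition $\delta\leq 1/2$ then enters only at the last line, to make the coefficient $1-2\delta$ nonnegative in front of $\sum_{i\in S}\bigl(\E[(X_i-T)_+]-\E[(X_i-\tau_i)_+]\bigr)\geq 0$. Fact (ii) is exactly the agent-agnostic control over the gap between the agent's $Y$-optimal selection and the threshold strategy that you flag as ``the main obstacle''; it is special to uniform matroids and is not delivered by $c$-selectability. To repair your proof you would need to replace the OCRS step with this order-statistic threshold and event decomposition.
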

\begin{proof}
	For $0\leq \delta < 1/2$, it is sufficient to prove the theorem for $\delta = \delta'$ as $(\delta,\delta)$-factor delegation is also a $(\delta,\delta')$ delegation for any $\delta'\geq \delta$. Consider the delegation strategy in which the principal sets a threshold $T$ such that $\Pr[|\{i : Z_i^{\min} \geq T\}| \geq k] = \delta$ and restricts the agent to the set of elements $S = \{i: \tau_i \geq T\}$. Among the elements in $S$, they will accept any combination of outcomes of utility at least $T$ (subject to the $k$-uniform matroid constraint):
	\begin{equation*}
		\mathcal R = \{ \{ (i, x_i, y_i) : i \in S_k \} : S_k \subseteq S \text{~and~} |S_k| \le k \text{~and all~} (x_i, y_i) \in \supp(\mu_i) \text{~and all~} x_i \ge T \}
	\end{equation*}

	We will show that $\mathcal R$ achieves an $\delta$-factor of $\opt$ when the principal pays $1 - \delta$ factor of the total probing cost. Now, let's first bound $\opt$:
	
	\begin{align*}
		\opt
		&= \E \left[ \max_{Q : |Q| \leq k} \sum_{i \in Q} Z^{\min}_{i} \right]\\
		&\leq kT + \E \left[ \max_{Q : |Q| \leq k} \sum_{i \in Q} (Z^{\min}_{i}-T)_+\right] \\
		&\leq kT + \sum_{i=1}^n \E[(Z^{\min}_i-T)_+]\\
		&= kT + \sum_{i\in S} \E[(Z^{\min}_i-T)_+]
	\end{align*}
	The last equality holds because for all $i \notin S$, $\tau_i < T$ implies that $Z^{\min}_i < T$. Hence $(Z^{\min}_i - T)_+ = 0$ with probability $1$. Now, we claim that for all $i \in S$, we have $(Z^{\min}_i - T)_+ = (X_i - T)_+ - (X_i - \tau_i)_+$ with probability $1$. Recall that $\tau_i \geq T$ for all $i \in S$. So when $X_i \geq \tau_i \geq T$ we have that $i \in S$: $ (X_i - T)_+ - (X_i - \tau_i)_+ = \tau_i - T = (Z^{\min}_i - T)_+$, and when $X_i < \tau_i$ we similarly get $ (X_i - T)_+ - (X_i - \tau_i)_+ = (X_i - T)_+ = (Z^{\min}_i - T)_+$. Therefore, we can modify the upper bound on $\opt$ as follows:
	\begin{align}
		\opt &\leq kT + \sum_{i\in S}\left\{(X_i-T)_+ - (X_i-\tau_i)_+\right\} \notag \\
		&\leq kT + \sum_{i\in S} \E[(X_i-T)_+] - c(S)
	\end{align}
	
	Now we will lower bound the principal's delegated utility under strategy $\mathcal R$. Since the agent does not pay any probing costs, they will (in the worst case) probe all elements in $S$ and propose a set of elements $E'$ with $X_i \geq T$ for each $i \in E'$ (if such elements exist) that maximizes their value $\sum_{i\in E'}Y_i$. Recall that we assume the agent will not probe any elements for which they have $0$ expected utility and do not benefit the principal, so the agent won't probe any elements outside of $S$.
	
	Let $A$ be the set of elements with $Z^{\min}_i \geq T$. By definition of the threshold T,
	\begin{align*}
		\Pr [|A| \geq k]
		&= \Pr[\exists A \subseteq [n], |A| \geq k, Z^{\min}_i \geq T \text{~for all~} i \in A] \\
		&= \Pr[\exists A \subseteq S, |A| \geq k, Z^{\min}_i \geq T \text{~for all~} i \in A] \\
		&= \Pr[\exists A \subseteq S, |A| \geq k, X_i \geq T \text{~for all~} i \in A] \\
		&= \delta
	\end{align*}
	
	The above equality shows that there will be at least $k$ element in $S$ with $X_i \geq T$ with probability $\delta$, therefore the principal will at least obtain value $kT$ plus some extra value with probability $\delta$. We assume the worst-case behavior from the agent: they probe all elements in $S$, and if $A$ is the set of elements $i$ for which $X_i \geq T$, then the agent proposes a maximal set of elements in $A$ with the minimum $x_i$ values. 
	
	Consider the following three events: $|A| > k$, $1 \leq |A| \leq k$, and $A = \emptyset$. Note that $\Pr[|A| > k ] + \Pr[|A| = k]  =\delta$ and $ \Pr[|A| < k] = 1 - \delta$. Moreover, whenever $|A|\leq k$, the agent will select the entirety of $A$ and propose to the principal because they have no incentive to drop any element $i$ with $x_i\geq T$. 
	
	Now, we can lower-bound the principal's delegated expected utility for the worst-case agent with $1 - \delta$ discount factor as follows:
	
	\begin{align}
		&\E[\del]\notag \\
		&\geq \E[\del ~|~ |A|>k] \cdot \Pr[|A|>k] + \E[\del ~|~ 1 \le |A| \le k] \cdot \Pr[1 \le |A| \le k] + \E[\del|A=\emptyset]\Pr[A=\emptyset] \notag \\
		&\geq kT(\Pr[|A|>k]) + \E[\del | 1 \le |A| \le k] \cdot \Pr[1 \le |A| \le k] - (1 - \delta) c(S) (\Pr[|A|>k]+\Pr[A=\emptyset]) \label{eq:high_value} \\
		&\geq \begin{aligned}[t]
			&k T(\Pr[|A|>k] + \Pr[|A| = k]) - (1 - \delta)c(S) \\
			&+ \sum_{i\in S} \E[X_i - T | X_i \geq T \land |A| \leq k] \Pr[X_i \geq T] \cdot \Pr[|A \setminus i| \leq k - 1]
		\end{aligned} \label{ineq:del} \\
		&\geq \delta k T +  \sum_{i \in S} \E[(X_i - T)_+] \cdot \Pr[|A \setminus i| \leq k - 1] - (1 - \delta) c(S) \notag \\
		&\geq \delta k T  + \sum_{i \in S} \E[(X_i - T)_+] \cdot \Pr[|A| \leq k] - (1 - \delta) c(S) \notag \\
		&\geq  \delta k T + (1 - \delta) \sum_{i \in S} \E[(X_i - T)_+] - (1 - \delta) c(S) \notag \\
		&= \delta \left(k T + \sum_{i \in S} \E[(X_i - T)_+] - c(S) \right) + (1 - 2 \delta) \left( \sum_{i \in S} \E[(X_i - T)_+] - \sum_{i \in S} \E[(X_i -
		\tau_i)_+] \right) \label{eq:separation_of_cost_to_optimal} \\
		&\geq \delta \opt .\label{eq:result_1-uniform}
	\end{align}

	Inequality~\ref{eq:high_value} holds because the principal will obtain at least utility of $kT$ when $|A|\geq k$. Inequality \eqref{ineq:del} holds because when $1\leq |A|\leq k$, the agent will propose the entire set $A$. The inequality \eqref{eq:result_1-uniform} holds because $\delta \leq 1/2$ and $\tau_i \geq T$ for all $i \in S$ implies that $\E[(X_i-T)_+] - \E[(X_i-\tau_i)_+]\geq 0$. This concludes the proof.
%
\end{proof}


We now extend the constant delegation gap for the free-agent model with constant discounts to general downward-closed constraints. In Theorem \ref{thm:efficient_delegation_from_OCRS}, we show a reduction from the free-agent model with constant discounts to selectable greedy OCRS. We show that if there exist $\alpha$-selectable greedy OCRS for the polytope $P_\mathcal I = \operatorname{conv} \{\mathrm{1}_S:S\in \mathcal I\}$ then the principal can construct a $(\alpha,1-\alpha)$-factor delegation strategy for the free-agent model with constraint $\mathcal I$. Theorem~\ref{thm:efficient_delegation_from_OCRS} further implies constant factor delegation for the free-agent model with constant discounts for general matroids, matchings, and knapsack constraints.

\begin{theorem}\label{thm:efficient_delegation_from_OCRS}
	Given an instance of the free-agent model with constraint $\mathcal I$, if there exists an $\alpha$-selectable greedy OCRS for the polytope $P_\mathcal I = \operatorname{conv} \{\mathrm{1}_S:S\in \mathcal I\}$, then there exists a $(\alpha, \delta)$-factor strategy for the given instance where the discount factor $\delta \geq 1-\alpha$.
\end{theorem}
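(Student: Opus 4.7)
The plan is to derive the mechanism from the OCRS via an ex-ante threshold relaxation of $\opt$, following the pattern of Proposition~\ref{prop:unif_matroids_freagent}. Let $S^*$ be a random maximizer of $\sum_{i\in S}Z_i^{\min}$ over $S\in \mathcal I$ and set $\lambda_i=\Pr[i\in S^*]$; then $\lambda\in P_\mathcal I$ as a convex combination of indicator vectors of feasible sets. Define $t_i$ to be the $(1-\lambda_i)$-quantile of $Z_i^{\min}$; since $Z_i^{\min}\le \tau_i$ deterministically we have $t_i\le \tau_i$, and hence $\Pr[X_i\ge t_i]=\Pr[Z_i^{\min}\ge t_i]=\lambda_i$ (discrete distributions are handled by the standard randomized threshold at atoms). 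Summing the elementary inequality $\E[Z_i^{\min}\mathbf 1[i\in S^*]]\le t_i\lambda_i+\E[(Z_i^{\min}-t_i)_+]$ over $i$ and invoking the pointwise identity $(X_i-t_i)_+-(X_i-\tau_i)_+=(Z_i^{\min}-t_i)_+$ (valid since $t_i\le \tau_i$) together with $\E[(X_i-\tau_i)_+]=c_i$ gives
\begin{equation*}
\opt \;\le\; \E\!\left[\sum_{i\in S^*}Z_i^{\min}\right]
\;\le\; \sum_i\bigl(\lambda_i t_i+\E[(X_i-t_i)_+]-c_i\bigr).
\end{equation*}

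Feed $\lambda$ to the $\alpha$-selectable greedy OCRS $\pi$ to obtain the downward-closed family $\mathcal I_\lambda\subseteq \mathcal I$, and let the principal announce the acceptable set $\mathcal R = \{\{(i,x_i,y_i):i\in S\}:S\in \mathcal I_\lambda,\; x_i\ge t_i\;\forall i\in S\}$. The OCRS active set $R(\lambda)$ is then exactly the realized set $\{i:X_i\ge t_i\}$, since the marginals match $\lambda$ and the $X_i$'s are independent. In the free-agent model the agent pays nothing to probe, so in the worst case they probe every $i$ with $\lambda_i>0$ and propose a solution $S^{\text{ag}}\in \mathcal R$ maximizing $\sum Y_i$; since every $Y_j\ge 0$ and $\mathcal I_\lambda$ is downward-closed, $S^{\text{ag}}$ is inclusion-maximal in $\mathcal I_\lambda\cap 2^{R(\lambda)}$, with ties in agent utility resolved in the principal's favor.

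For each $i$, let $A_i$ be the event that $I\cup\{i\}\in \mathcal I_\lambda$ for every $I\subseteq R(\lambda)$ with $I\in \mathcal I_\lambda$. Because $A_i$ is determined by $R(\lambda)\setminus\{i\}$, it is independent of $(X_i,Y_i)$, and $\alpha$-selectability gives $\Pr[A_i]\ge \alpha$. Whenever $X_i\ge t_i$ and $A_i$ both occur, appending $i$ to $S^{\text{ag}}$ preserves feasibility in $\mathcal I_\lambda$ and weakly increases the agent's utility, so maximality forces $i\in S^{\text{ag}}$. Hence
\begin{equation*}
\E\!\left[\sum_{i\in S^{\text{ag}}}X_i\right]
\;\ge\;\sum_i \E[X_i\mathbf 1[X_i\ge t_i]\mathbf 1[A_i]]
\;=\;\sum_i \E[X_i\mathbf 1[X_i\ge t_i]]\cdot\Pr[A_i]
\;\ge\; \alpha\sum_i\bigl(\lambda_i t_i+\E[(X_i-t_i)_+]\bigr).
\end{equation*}
Choosing $\delta=1-\alpha$ discounts probing costs to $\alpha c_i$ per probed element, so the principal's total cost is at most $\alpha\sum_i c_i$ and $\E[\del_{\mathcal R}]\ge \alpha\,\opt$ by the first display; any larger $\delta'\ge 1-\alpha$ only decreases costs further and hence only helps. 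The main obstacle is the agent-behavior step: since the agent maximizes $\sum Y_i$ rather than $\sum X_i$, we must ensure that their proposal actually ``covers'' the OCRS selectability guarantee even when some $Y_i=0$. This is resolved by extending the standing tie-breaking convention to the proposal decision (equivalently, by pruning $\mathcal R$ to its inclusion-maximal solutions in $\mathcal I_\lambda$), which forces the agent to include any feasibly addable element.
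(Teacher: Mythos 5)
Your proof is correct and follows essentially the same route as the paper's: the same ex-ante marginals $\lambda=p^*$ from the random optimizer of $\sum_i Z_i^{\min}$, the same quantile thresholds and upper bound on $\opt$, the same OCRS-induced acceptable set, and the same use of $\alpha$-selectability to lower-bound the probability each active element appears in the agent's (worst-case maximal) proposal. Your phrasing via the events $A_i$ and inclusion-maximality is just a cleaner restatement of the paper's "almighty adversary" framing of the worst-case agent, so there is nothing substantive to distinguish the two arguments.
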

\begin{proof}
	Given an instance of the delegated Pandora's box problem for the free-agent model with elements $E$ and constraint $\mathcal I$, let a random optimal set $I^*$ defined as follows: $I^* = \argmax_{S \in \mathcal I} \sum_{i \in S} Z^{\min}_i$. We define $p_i^* = \Pr[i \in I^*]$ and thresholds $t_i$ such that $\Pr[Z^{\min}_i \geq t_i] = p_i^*$. Notice that $I^* \in \mathcal I$ with probability $1$, we have that $p^*$
	is a convex combination of characteristic vectors of
	feasible sets $\mathcal I$, and hence, $p^* \in P_\mathcal I$. The principal rejects all elements not in $E' =\{i\in E: p_i^*>0\}$, so the agent has no incentive to probe them. Note that for all elements $i \in E'$, $0 < t_i < \tau_i$. We can bound the optimal utility as follows:
	
	\begin{align*}
		\opt
		&\leq \E \left[ \max_{S\in \mathcal I} \sum_{i \in S} Z^{\min}_i \right] = \sum_{i \in E'} \E[Z^{\min}_i ~|~ i \in I^*] \Pr[i \in I^*] \\
		&\le \sum_{i \in E'} \E[Z^{\min}_i ~|~ Z^{\min}_i \geq t_i] p^*_i\\
		&= \sum_{i \in E'} \E[(Z_i^{\min} - t_i)_+] + \sum_{i \in E'} t_i p^*_i \\
		&= \sum_{i \in E'} \E[(X_i - t_i)_+ - (X_i - \tau_i)_+] + \sum_{i \in E'} t_i p^*_i \\
		&= \sum_{i \in E'} \E[(X_i - t_i)_+] + \sum_{i \in E} t_i p^*_i - c(E')
	\end{align*}
	Let $\mathcal I_{p^*} \subseteq \mathcal I$ be the downward closed family generated by $\alpha$-selectable greedy OCRS for $p^* \in  P_\mathcal I$. Now, consider a delegation strategy in which the principal accepts a proposal of elements $S$ if and only if $S \in \mathcal I_{p^*}$ and the realizations of all $i \in S$ is greater than or equal $t_i$, i.e.
	\begin{equation*}
		\mathcal R = \{ \{ (i, x_i, y_i) : i \in Q \} : Q \in \mathcal I_{p^*} \text{~and~} (x_i, y_i) \in \supp(\mu_i) \text{~and~} x_i \ge t_i \text{~for all~} i \in Q \}
	\end{equation*}
	
	Since the agent does not incur any cost for probing, in the worst case they will probe all elements in $E'$. Let $R(t)$ be the set of elements with $X_i \geq t_i$. The agent will always propose some maximal set $I$ with $I \subseteq R(t)$ and $I \in \mathcal I_{p^*}$. More formally, let
	\begin{equation*}
		\mathcal I_{p^*}^{R(t)} = \{S : (S \in \mathcal I_{p^*}) \text{~and~} (X_i \geq t_i \text{~for all~} i \in S) \text{~and~} (S \cup i' \notin \mathcal I_{p^*} \text{~for all~} i' \in E' \setminus S \text{ with } X_{i'} \geq t_{i'})\}
	\end{equation*}
	be the family of sets of elements that the agent might propose. In the worst-case, they will propose some such set of elements that minimizes the principal's utility. We can think of this worst-case agent as follows: an almighty adversary who presents elements in the worst possible sequence for all realizations, the agent then picks element $i$ if and only if $X_i \geq t_i$ and the selected set satisfies the feasible constraints $\mathcal I_{p^*}$ \footnote{An almighty adversary knows the coin flips of the agent's strategy, i.e. $\mathcal I_{p^*}$ and $R(t)$. Therefore, an almighty adversary can force the agent to select any $S\in \mathcal I_{p^*}^{R(t)}$ of their choice.}.
	Note that this agent has no incentive to pick a set outside of $\mathcal I_{p^*}^{R(t)}$. Since $\mathcal I_{p^*}$ is generated by an $\alpha$-selectable greedy OCRS, given any currently selected set by the agent $S \subseteq R(t)$ and $S \in \mathcal I_{p^*}$, we have $\Pr[S \cup i \in \mathcal I_{p^*}] \geq \alpha$. Therefore for all elements $i \in E$, we have $\Pr[i \in I] \geq \alpha \cdot \Pr[X_i \geq t_i]$. Then
	\begin{align*}
		\E[\del]
		&= \E\left[\sum_{i \in I} X_i \right] - (1 - \delta) c(E')\\
		&\geq \sum_{i \in E'} \E[X_i ~|~ i \in I] \cdot \Pr[i \in I] - (1 - \delta) c(E').
	\end{align*}
	Since the agent selects an element $i$ only if $X_i \geq t_i$, on the adversarial arrival of elements selected by an almighty adversary, $\E[X_i ~|~ i \in I] = \E[X_i ~|~ X_i \geq t_i]$. We can bound the principal's expected delegation with a constant discount $\delta \geq 1 - \alpha$ as follows:
	\begin{align*}
		\E[\del]
		&= \sum_{i \in E'} \E[X_i ~|~ i \in I] \cdot \Pr[i \in I] - (1 - \delta) c(E') \\
		&\geq \alpha  \cdot \sum_{i \in E'} \E[X_i ~|~ X_i \geq t_i] \cdot \Pr[X_i \geq t_i] - \alpha \cdot c(E') \\
		&= \alpha \cdot \left\{ \sum_{i \in E'} \E[(X_i - t_i)_+] + \sum_{i \in E} t_i p^*_i - c(E') \right\} \\
		&\geq  \alpha \cdot \opt
	\end{align*}
	Concluding the proof.
\end{proof}

We note that the argument above reduces deterministic delegation, in which the principal chooses their strategy deterministically, to deterministic greedy OCRS. Perhaps surprisingly, it can also reduce deterministic delegation to randomized greedy OCRS as defined in \cite{feldman2016online}. The reason is that any randomized greedy OCRS is randomization over deterministic OCRS, so the reduction constructs a distribution over delegation mechanisms achieving the desired approximation. However, our model of delegation is a Stackelberg game in which the principal moves first, so their best randomized strategy can be no better than their best deterministic strategy. Therefore, the principal can choose the best deterministic strategy from among the distribution provided by the reduction for the same approximation factor.

Theorem~\ref{thm:efficient_delegation_from_OCRS} combined with efficient $\alpha$-selectable greedy OCRS schemes \cite{feldman2016online} implies the following corollary.

\begin{corollary}
	There exist $(\alpha,\delta)$-factor delegation strategies (agent-agnostic) for the free-agent model with matroid, matching, and knapsack constraints and constant discount factor $\delta$. Specifically, these constants for matroids, matchings, and knapsacks are $\alpha =1/4, \delta \geq 3/4$, $\alpha = 1/2e, \delta \geq 1 -1/2e$ and $\alpha = 3/2 - \sqrt 2,\delta \geq \sqrt 2 - 1/2$, respectively.
\end{corollary}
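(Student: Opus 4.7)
The plan is a direct application of Theorem~\ref{thm:efficient_delegation_from_OCRS}, which reduces the problem of building a $(\alpha,\delta)$-factor delegation strategy with $\delta \geq 1-\alpha$ in the free-agent model to the problem of exhibiting an $\alpha$-selectable greedy OCRS for the polytope $P_{\mathcal I}$. So once we have such an OCRS in hand for each of the three constraint families, the corollary follows by plugging in the selectability constants and setting $\delta = 1-\alpha$.

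First I would invoke the greedy OCRS constructions of \cite{feldman2016online}. They provide greedy OCRS against the almighty adversary with selectability $1/4$ for any matroid polytope, $1/(2e)$ for bipartite (or general) matching polytopes, and $3/2-\sqrt{2}$ for knapsack polytopes. Feeding each of these into Theorem~\ref{thm:efficient_delegation_from_OCRS} yields the three claimed pairs: $(\alpha,\delta)=(1/4,\,3/4)$ for matroids, $(1/(2e),\,1-1/(2e))$ for matchings, and $(3/2-\sqrt{2},\,\sqrt{2}-1/2)$ for knapsacks. Since the strategy in that theorem is defined deterministically from the marginals and since randomized greedy OCRS can always be derandomized inside the principal's Stackelberg commitment (as noted in the paragraph after the theorem), this works even when the OCRS of \cite{feldman2016online} is randomized.

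To get the agent-agnostic qualifier, I would note that the acceptable set $\mathcal R$ constructed in the proof of Theorem~\ref{thm:efficient_delegation_from_OCRS} depends only on the ground set $E$, the constraint $\mathcal I$, the costs $\{c_i\}$, the marginal distributions $\{X_i\}$ (through the thresholds $t_i$ determined by the distribution of $Z_i^{\min}$ and the marginal probabilities $p_i^*$), and the OCRS for $P_{\mathcal I}$. It does not reference the agent's value distributions $\{Y_i\}$ at any point. Moreover, the lower bound on $\E[\del]$ derived there bounds the principal's utility under the \emph{worst case} behaviour consistent with the free-agent model --- it essentially treats the agent as an almighty adversary subject only to probing every element of $E'$ and proposing some maximal feasible set in $\mathcal I_{p^*}^{R(t)}$. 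This worst-case guarantee holds uniformly over all joint distributions $\{(X_i,Y_i)\}_{i\in E}$ with the given marginals $\{X_i\}$ and costs $\{c_i\}$, which is exactly the agent-agnostic property.

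There is no real obstacle here; the only thing to double-check is that the selectability constants quoted from \cite{feldman2016online} are indeed against the almighty adversary (which is the notion used in Theorem~\ref{thm:efficient_delegation_from_OCRS}, since the worst-case agent is implicitly modelled as one), and that the greedy form of those OCRS --- rather than a more general non-greedy contention resolution scheme --- is what appears there. Both are the case in \cite{feldman2016online}, so the corollary follows immediately.
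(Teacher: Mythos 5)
Your proposal is correct and is precisely the argument the paper intends: the corollary is stated as an immediate consequence of Theorem~\ref{thm:efficient_delegation_from_OCRS} combined with the $1/4$-, $1/(2e)$-, and $(3/2-\sqrt{2})$-selectable greedy OCRS of \cite{feldman2016online}, with $\delta = 1-\alpha$ in each case. Your additional observations on the agent-agnostic property and on derandomizing a randomized greedy OCRS inside the principal's commitment match the remarks the paper makes around the theorem.
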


\subsection{Free-Agent Model Impossibility without Discounts}
\label{free-agent-model-impossibility}

One of the primary motivations for introducing this model comes from the impossibility in Section \ref{standard-model-impossibility} and an attempt to circumvent one of the challenges with achieving a constant delegation gap. Recall from that section, the instance for which $X_i = n$ and $Y_i = n$ independently with probability $1 / n$ each and $0$ otherwise. Now that the agent does not pay to probe, the principal may choose accept only outcome $(i, n, n)$ from element $i$ because the agent's expected utility from probing $i$ is $n \cdot \Pr[X_i=n] \Pr[Y_i=n] = 1/n > 0$. However, since the agent does not pay to probe, they may probe all elements that can be accepted with nonzero probability so long as they could do better by probing such elements. Therefore, the agent might incur too large a probing cost for the principal compared to what the principal would pay on their own. In Proposition \ref{prop:impos2}, we describe a family of instances of the free-agent model for which the delegation gap is $O(1/n^{1/4})$ without any discounts. Proposition~\ref{prop:impos2} shows that it is impossible to obtain a constant factor delegation gap for the free-agent model without any discounts, even when the agent breaks all ties in favor of the principal. Moreover, it holds even when the agent does not probe all possible elements whose outcome is acceptable with nonzero probability.

\begin{restatable}{proposition}{propimposfreeagent}\label{prop:impos2}
	There exists an instance of the free-agent model on $n$ elements with a $1$-uniform matroid constraint such that the delegation gap is $O ({1}/{n^{\frac{1}{4}}})$, even when the agent breaks all ties in favor of the principal.
\end{restatable}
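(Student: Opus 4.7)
My plan is to exhibit an instance where $\opt = \Omega(n^{3/4})$ but every single-proposal mechanism yields expected delegated utility at most $O(n^{1/2})$, which forces the delegation gap to be $O(n^{-1/4})$. Writing $\epsilon = n^{-1/4}$, I will take $n$ elements with a $1$-uniform matroid constraint, common probing cost $c_i = 1-\epsilon$, and independent pairs $(X_i, Y_i)$ with $X_i = n$ with probability $1/n$ (else $0$), and $Y_i \sim \mathrm{Uniform}[1,2]$ independent of $X_i$. Both standing assumptions $\E[X_i] = 1 > c_i$ and $\E[Y_i] = 3/2 > c_i$ hold. A routine Weitzman calculation gives cap $\tau_i = n\epsilon = n^{3/4}$ and $Z_i^{\min} = n^{3/4}\mathbbm{1}[X_i = n]$, so Weitzman's non-delegated policy achieves $\opt \geq (1 - (1-1/n)^n)\, n^{3/4} \geq (1-1/e)\, n^{3/4}$.

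The main work is the uniform upper bound $\E[\del_{\mathcal R}] \leq n\epsilon^2 = n^{1/2}$ for every mechanism $\mathcal R$. Fixing $\mathcal R$, I will let $A_j$ denote the outcomes of element $j$ it accepts, and set $S = \{j : \mu_j(A_j) > 0\}$ with $k = |S|$. Three steps then combine: (i) the agent probes every element of $S$, so the principal incurs cost $k(1-\epsilon)$; (ii) since $X_j \in \{0,n\}$ and $\Pr[(X_j, Y_j) \in A_j \cap \{(n,\cdot)\}] \leq 1/n$, the principal's expected value is at most $n(1 - (1-1/n)^k)$; (iii) combining gives $\E[\del_{\mathcal R}] \leq n(1-(1-1/n)^k) - k(1-\epsilon)$, a concave function of $k$ maximized at $k^* \approx n\epsilon$ with peak value $n\epsilon - n\epsilon(1-\epsilon) = n\epsilon^2$. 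For mechanisms that also accept some $(j, 0, y)$ outcome, the agent under continuous $Y$ will almost surely find a high-$Y$ proposable $(0,y)$ as the argmax, dropping the principal's expected value to $O(1)$, so these mechanisms are strictly worse and do not threaten the upper bound.

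The heart of the argument is step (i), and this is the main obstacle. Here the continuity of $Y_j$ does all the work: after any finite prefix of probes the agent's running maximum $y^*$ of proposable $Y$-values lies almost surely strictly below $\sup\bigcup_j \{y : \exists x,\, (j, x, y) \in A_j\}$, so each further probe of an element in $S$ has \emph{strictly} positive expected marginal utility $\E[(Y_j - y^*)_+ \mathbbm{1}[(X_j, Y_j) \in A_j]]$. This strict preference is precisely what neutralizes the favorable tie-breaking rule --- the agent never has reason to stop early for the principal's benefit. In particular, the otherwise-threatening ``agent-indifferent'' designs (such as $A_j$ equal to a single point outcome $(j, n, y_0)$, under which a cooperative agent could implement Weitzman's policy on the principal's behalf) have $\mu_j$-measure zero under continuous $Y$ and hence fall outside $S$, contributing nothing. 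Combining the upper bound with the lower bound on $\opt$ yields the claimed delegation gap of $O(n^{1/2}/n^{3/4}) = O(n^{-1/4})$.
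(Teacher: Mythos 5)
Your instance and overall plan differ genuinely from the paper's (you use a continuous $Y_i \sim \mathrm{Unif}[1,2]$ where the paper uses a two-point distribution with a negligible value $\delta_i$ and a huge value $e^{n^2}$, and you aim for $\opt = \Omega(n^{3/4})$ versus $\E[\del] = O(\sqrt n)$ rather than the paper's $\Theta(n^{1/4})$ versus $O(1)$), and the target bound is plausibly correct. However, the central step has a genuine gap. Step (i) claims the agent probes \emph{every} element of $S$, justified by ``$y^*$ is a.s.\ strictly below $\sup\bigcup_j\{y : \exists x,\,(j,x,y)\in A_j\}$, so each further probe of an element in $S$ has strictly positive expected marginal utility.'' That inference is invalid: positivity of $\E[(Y_j - y^*)_+\mathbbm{1}[(X_j,Y_j)\in A_j]]$ requires $A_j$ \emph{itself} to have positive measure above the current running maximum $y^*$, not the union over all $j$. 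A mechanism with $A_j = \{(j,n,y) : y\in[1,c_j]\}$ for some $c_j<2$ has $\mu_j(A_j)>0$, so $j\in S$, yet once $y^* > c_j$ the agent is exactly indifferent about probing $j$; the favorable tie-breaking then makes them \emph{skip} $j$, since probing it costs the principal $1-\epsilon$ and cannot change the proposal. Your parenthetical only rules out measure-zero $A_j$, not positive-measure $A_j$ with a capped $Y$-projection. Consequently the probed set is a random, adaptive, outcome-dependent subset of $S$, and the formula $n(1-(1-1/n)^k)-k(1-\epsilon)$ does not directly bound $\E[\del_{\mathcal R}]$.

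To close this you need one of two additional arguments: (a) an exchange argument showing the principal never benefits from capping the $Y$-range of acceptable outcomes, so that without loss of generality every $A_j$ has positive measure arbitrarily close to $2$ and the agent really does probe all of $S$; or (b) an optional-stopping argument: letting $T=|\prob|$, show $\Pr[\exists\text{ probed } j : X_j=n]\le \E[1-(1-1/n)^T]$ (each freshly probed element has $X=n$ with probability $1/n$ independently of the history that triggered its probe), whence $\E[\del_{\mathcal R}]\le \E[g(T)]\le\max_t g(t)=O(n\epsilon^2)$ by concavity of $g(t)=n(1-(1-1/n)^t)-(1-\epsilon)t$ and Jensen. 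A secondary soft spot: your dismissal of mechanisms accepting $(j,0,y)$ outcomes is also wrong as stated --- if the $(0,y)$-acceptance region is a thin slice at the bottom of $[1,2]$, the agent's argmax is still usually an $(n,y)$ outcome and the value does not drop to $O(1)$; what you actually need (and can get by another exchange argument) is that accepting $X=0$ outcomes never strictly helps. The paper avoids both difficulties by choosing a two-point $Y$ and proving a sequence of explicit exchange lemmas that pin down the exact structure of an optimal $\mathcal R$, after which the agent's sequential probing and the resulting utility are computed in closed form.
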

We defer the proof of Proposition~\ref{prop:impos2} to Appendix~\ref{appendix:proof_propo5.5} to prevent interruptions to the flow of the paper.

\subsection{Discounted-Cost Impossibility}
\label{discounted-cost-model-impossibility}

With constant-factor delegation gaps for the free-agent model with discounts and an impossibility for the free-agent model, one might hope that the standard model with constant discounts might admit constant delegation gaps. However, we again have an impossibility. In Proposition~\ref{prop:impo42}, we show that there exists a family of instances of the standard model, parameterized by the number of elements $n$, with a generous discount factor $\delta = 1 - 1/\sqrt n$ for which there does not exist any constant factor delegation strategies. Thus, Proposition~\ref{prop:impo42} shows that there can not exist an $(\alpha, \delta)$-strategy for this problem with constants $\alpha$ and $\delta<1$. See 

\begin{restatable}{proposition}{propimposdiscounts}\label{prop:impo42}
	There exist instances of the discounted-cost model on $n$ elements with discount factor $\delta = 1 - 1 / \sqrt n$ (the agent and the principal both pay $(1-\delta) c_i$ for all elements, i.e. $c_i / \sqrt n$) for which the delegation gap is $O\left(1 / \sqrt n \right)$.
\end{restatable}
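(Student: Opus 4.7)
The plan is to construct a family of instances closely mirroring Proposition~\ref{prop:impos1}, namely $n$ identical elements with a $1$-uniform matroid constraint, probing cost $c_i = 1 - \varepsilon$ for small $\varepsilon > 0$, and independent marginals $\Pr[X_i = n] = 1/n$ and $\Pr[Y_i = M] = 1/M$ with $M \geq n/\varepsilon$. Weitzman's rule on the original (undiscounted) instance $I$ gives the principal's cap $\tau = \varepsilon n$, so the non-delegated utility satisfies $\opt_I \geq \varepsilon n (1 - 1/e) = \Omega(\varepsilon n)$. Under the discount factor $\delta = 1 - 1/\sqrt n$, the per-probe cost in the discounted instance $J$ becomes $(1-\varepsilon)/\sqrt n$, shared equally by both parties.

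The upper bound on the delegated utility $\E[\del_{\mathcal R}]$ in $J$ proceeds via the four-outcome case analysis on $\mathcal R$ familiar from Proposition~\ref{prop:impos1}. Define $E^* = \{i : \{(i, n, M), (i, 0, M)\} \subseteq \mathcal R\}$ with $|E^*| = k$. The key observation is that for elements $i \notin E^*$ the agent's expected utility per probe is still non-positive: if only $(i, n, M) \in \mathcal R$, the agent's gain is at most $M \cdot pq - (1-\varepsilon)/\sqrt n = 1/n - (1-\varepsilon)/\sqrt n < 0$ for large enough $n$, while the case $(i, 0, M) \in \mathcal R$ alone yields no principal value even if probed. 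Consequently the agent probes only elements in $E^*$. Inside $E^*$, the agent's discounted Pandora's-box strategy on the $Y$-component has cap value approximately $\varepsilon M$, so they stop at the first $Y_i = M$ observation, an event occurring with probability at most $k/M \leq k\varepsilon/n \leq \varepsilon$ (using $k \leq n$). Conditional on such a find, the principal's expected value is exactly $np = 1$ by independence of $X_i$ and $Y_i$.

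Putting these together, the principal's expected delegated utility attributable to the agent finding $Y_i = M$ is at most $\varepsilon \cdot 1 = \varepsilon$, and the additional cost incurred is $E[J] \cdot (1-\varepsilon)/\sqrt n \geq 0$. Ruling out alternative acceptable sets (including those using $(i,n,0)$) by the same per-element negativity arguments, one gets $\E[\del_{\mathcal R}] \leq O(\varepsilon)$ for every mechanism $\mathcal R$. Combining with $\opt_I = \Omega(\varepsilon n)$ gives the delegation gap $O(1/n) \subseteq O(1/\sqrt n)$ as claimed. The main obstacle is handling mechanisms that include $(i, n, 0)$-type outcomes and rely on the agent's tie-breaking in favor of the principal: under a $1/\sqrt n$ discount the agent is much more willing to probe, and a naive tie-breaking argument could let the principal collect $\Theta(n)$ value from $X_i = n$ outcomes observed during an ultimately-failed search for $Y_i = M$. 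Resolving this will require either arguing that the standard-model variant of Proposition~\ref{prop:impo42} does not assume tie-breaking, or, in the tie-breaking case, a refined accounting showing that the $M \geq n/\varepsilon$ scaling together with the discounted cost $(1-\varepsilon)/\sqrt n \cdot E[J]$ on the expected $\Theta(n)$ probes exactly cancels the tie-break value gain, leaving the net bounded by $O(\varepsilon)$.
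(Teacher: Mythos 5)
You have the right template (recycle the instance from Proposition~\ref{prop:impos1}) and you have correctly located the crux --- the windfall the principal collects, via tie-breaking, from $X_i=n$ realizations observed during a search for $Y_i=M$ that ultimately fails --- but your proposal does not resolve it, and with your parameters it cannot be resolved. Keeping $M \ge n/\varepsilon$ forces the failure probability $(1-1/M)^k$ to be at least $1-k/M \ge 1-\varepsilon$ for every feasible $k \le n$. In that event the agent has probed all $k$ elements of $E^*$, the principal collects up to $n(1-(1-1/n)^k) \approx k$ from tie-breaking, and the discounted cost $k(1-\varepsilon)/\sqrt n$ is far too small to cancel it. Taking $k=n$ the principal's delegated utility is $\Theta(n)$ against $\opt = \Theta(\varepsilon n)$, so your instance exhibits a delegation gap greater than $1$ rather than an impossibility. (Two smaller slips feed into this: in the discounted instance the agent's cap value is $M(1-(1-\varepsilon)/\sqrt n)\approx M$, not $\varepsilon M$; and elements with only $(i,0,M)\in\mathcal R$ \emph{will} be probed, since the agent's gain $\approx 1-(1-\varepsilon)/\sqrt n>0$, so the $E^*$ from Proposition~\ref{prop:impos1} is the wrong set to restrict attention to here.)

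The paper's proof escapes exactly this trap by rescaling the instance rather than by any refined cancellation argument: it sets $M=\sqrt n$ and $\varepsilon=n^{-1/4}$. With $M=\sqrt n$ the agent's search for $Y_i=M$ succeeds with constant probability once $k=\Omega(\sqrt n)$, so the tie-breaking windfall is weighted by $(1-1/\sqrt n)^k$ and the relevant term $f(k)=k(1-1/\sqrt n)^k$ is maximized at $k=\Theta(\sqrt n)$ with value $O(\sqrt n)$; meanwhile $\varepsilon = n^{-1/4}$ boosts $\opt$ to $\Theta(n^{3/4})$. The success branch is handled as you propose and contributes only $O(1)$. If you adopt these parameters, your outline goes through, with the caveat that $E^*$ should be redefined as $\{i : (i,0,M)\in\mathcal R\}$ and then pruned to the subcase $E^*=E^*_1$ as the paper does. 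You should also verify the final exponent yourself: the bound $\max_k k(1-1/\sqrt n)^k = O(\sqrt n)$ on the delegated utility against $\opt=\Theta(n^{3/4})$ delivers a gap of $O(n^{-1/4})$ by direct substitution, so check carefully whether the stronger $O(n^{-1/2})$ in the statement actually follows from this accounting.
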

\begin{proof}
	For any positive integer $n > 1$ and real $\varepsilon = 1 / n^{\frac{1}{4}}$, let $M$ be a positive integer such that $M = \sqrt n$ and consider the following instance of delegated Pandora's box. We have $n$ identical elements $E = \{1, \dots, n\}$ where each element $i$ has a probing cost $c_i = 1 - \varepsilon$ and random utilities $(X_i, Y_i) \sim \mu_i$. The principal's utility $X_i$ is $n$ with probability $\frac{1}{n}$ and $0$ otherwise. The agent's utility $Y_i$ is $M$ with probability $\frac{1}{M}$ independently of $X_i$ and $0$ otherwise. The constraint is a $1$-uniform matroid and there is no outer constraint. We let the agent break ties in favor of the principal.
	
	First, we will determine the principal's optimal non-delegated expected utility. This is given by the solution tothe generalized Pandora's box problem. For each element $i$, we must determine the cap value $\tau_i$ such that $\E (X_i - \tau_i)^+ = c_i$. It's not hard to verify for this instance that $\tau_i = \varepsilon n$. Then the optimal solution guarantees an expected utility of $U = \E \max_i \min(X_i, \tau_i)$ where each $\min(X_i, \tau_i)$ takes value $\varepsilon n$ with probability $\frac{1}{n}$ and $0$ otherwise. Therefore, $\max_i \min(X_i, \tau_i)$ takes value $\varepsilon n$ with probability $1 - \left( 1 - \frac{1}{n} \right)^n$ and the principal gets expected utility
	\begin{equation*}
		\opt = \varepsilon n \left( 1 - \left( 1 - \frac{1}{n} \right)^n \right) \ge \varepsilon n \left( 1 - \frac{1}{e} \right) = \Theta(n^{3/4}).
	\end{equation*}
	Now, we will bound the principal's delegated expected utility when both the agent and the principal get a discount factor of $\delta>1-1/n^{1/2}$. Consider an arbitrary acceptable set $\mathcal R$ that the principal might commit to. Since the constraint is $1$-uniform, $R$ consists of a set of singleton outcomes. Observe that every element $i$ evaluates to one of four tagged outcomes $(i, n, M)$, $(i, n, 0)$, $(i, 0, M)$, and $(i, 0, 0)$ with probabilities $\frac{1}{n M}$, $\frac{1}{n} \left( 1 - \frac{1}{M} \right) $, $\frac{1}{M} \left( 1 - \frac{1}{n} \right)$, and $\left( 1 - \frac{1}{n} \right) \left( 1 - \frac{1}{M} \right)$, respectively.
	
	Given $R$, let $E^* \subseteq E$ be the subset of elements $i$ for which $(i, 0, M) \in R$, and let $k = |{E^*}|$. Consider any element $i \notin E^*$. If outcome $(i, 0, M) \notin R$, then the agent's increase in expected utility from probing $i$ is at most $M \cdot \frac{1}{nM}  - (1 - \varepsilon)(1-\delta)  = \frac 1 n - \frac 1 {\sqrt n}(1-\varepsilon) < 0$ for large enough $n$, so they have no incentive to ever probe $i$. Therefore, for the rest of the proof, we assume that $k > 0$.
	
	The agent now faces an instance of Pandora's box problem, so their optimal strategy is to probe elements in order of weakly decreasing cap value (among non-negative cap values) and accept the first acceptable outcome whose value is above its cap. Note that the agent will only probe the elements that belong to $E^*$ We divide the elements in $E^*$ into the following disjoint sets:
	\begin{align*}
		E^*_1 &= \{i : \{(i,n,M), (i,0,M), (i,n,0)\} \subseteq \mathcal R\}, \\
		E^*_2 &= \{i : \{(i,n,M), (i,0,M)\} \subseteq \mathcal R\}, \\
		E^*_3 &= \{i : \{(i,0,M), (i,n,0)\} \subseteq \mathcal R\}.
	\end{align*}
	The optimal strategy for the agent is to first probe the elements in $E^*_1$ and then $E^*_2$ and stop once they find an outcome with utility $M$. If there is no such outcome, then they probe elements in $E^*_3$ and stops once they find an outcome $(i, 0, M)$. However, the principal has no incentive to construct $\mathcal R$ such that $E^*_2 \neq \emptyset$ or $E^*_3 \neq \emptyset$. For the sake of contradiction, let $E_2^* \neq \emptyset$, in that case, consider an event when the agent does not observe $i\in E^*$ with feasible outcome with $Y_i = M$, however, observes $i' \in E_2^*$ with $(i', n, 0)$. Conditioned on this event, the principal can strictly benefit by adding $(i', 0, n)$ to $\mathcal R$. In all other cases, the principal's utility is unchanged by adding $(i', n, 0)$. Therefore $E_2^* = \emptyset$. Similarly, we can show that the principal strictly benefits by adding $(i, n, M)$ to $\mathcal R$ for all $i \in E_3^*$. Hence, for the rest of the proof, we assume that $E^* = E^*_1$.
	
	Consider the utility that the principal gets when the agent finds an outcome of utility $M$. Among the $k = |{E^*}|$ elements that the agent might probe, they find a utility of $M$ with probability $1 - \left( 1 - \frac{1}{M} \right)^k$. Since the principal's utility for the proposed outcome is independent of the agent's, it will have utility $n$ for the principal with probability $\frac{1}{n}$. Since $k \ge 1$, the principal pays a cost of $1 - \varepsilon$ for the first probe. Therefore, the principal expects a utility of at most
	\begin{equation*}
		\left\{ 1 - \left( 1 - \frac{1}{M} \right)^k\right\} \cdot \left (\frac{n}{n} - (1 - \varepsilon)(1-\delta)\right) = O(1)
	\end{equation*}
	from this part of the agent's strategy.
	
	Now, with probability $\left( 1 - \frac{1}{M} \right)^k$, the agent doesn't find any outcomes of value $M$. Then the principal pays a cost of $k (1 - \varepsilon)$ in order to probe all $k$ elements in $E^*$. Since the agent breaks ties in favor of the principal, they will propose any acceptable outcomes of value $n$ to the principal. There exists such an outcome with probability at most $1 - \left( 1 - \frac{1}{n} \right)^k$. Therefore, the principal expects a utility of at most
	\begin{align*}
		\left( 1 - \frac{1}{M} \right)^k \cdot \left\{n \left(1 - \left( 1 - \frac{1}{n} \right)^k \right) - k(1 - \varepsilon)(1-\delta)\right\}
		&\le \left( 1 - \frac{1}{M} \right)^k	\cdot \left\{k - k(1 - \varepsilon)(1-\delta)\right\} \\
		&\le k (\varepsilon + \delta) \left( 1 - \frac{1}{M} \right)^k
	\end{align*}
	For the sake of exposition, let $f(k) = k \left( 1 - \frac{1}{\sqrt n} \right)^k$. For $k = o(\sqrt n )$, asymptotically, $f(k) = o(\sqrt n)$ and for $k=\omega (\sqrt n)$, $f(k) = \omega (\sqrt n) \mathrm e^{-\frac{\omega (\sqrt n)}{\sqrt n}} = o(\sqrt n)$. For $k=\Theta (\sqrt n)$, $f(k) = \Theta (\sqrt n)$. Therefore, $\max_k f(k) = O(\sqrt n)$ asymptotically.
	
	The above arguments imply that the principal's optimal expected delegation is bounded by $O((\delta + \varepsilon )\sqrt n)+O(1) = O(n^{1/4})$. Hence the delegation gap for the above instance is $O(1/n^{1/2})$.
	
	Note that the impossibility still holds if the principal samples $\mathcal R$ from any distribution $D$ over the sets of feasible solutions. We can similarly show that the optimal distribution $D^*$ over the feasible sets has positive support on the solutions $\mathcal R\in \Omega_\mathcal I$ for which $E^* = E_1^*$. Therefore, for any sample of feasible set $\mathcal R$ from $D^*$, $\E[\del] = O(1/\sqrt n)\opt$. Thus, $\E[\del] = O(1/\sqrt n)\cdot \opt$.
\end{proof}

\section{Shared-Cost Model}
\label{cost-sharing}

We now consider the \emph{shared-cost model}, where the principal decides how to split each probing cost with the agent. This final model gives the principal more control over probing costs in another attempt to get constant-factor delegation gaps despite our previous impossibility results. Recall that in this setting, the principal starts by choosing how to split each probing cost, so that the agent pays $c'_i \in [0, c_i]$ and the principal pays the remaining cost $c_i - c'_i \in [0, c_i]$. This model is motivated not only by our earlier impossibilities, but also by settings in which the principal has the power to pay chosen percentages of different costs that the agent may incur. For example, an organization (modeled by the principal) might reimburse chosen percentages of travel and lodging expenses associated with interviewing candidates based on the total amount of cost and expected quality of the candidate. The interviewer (agent) can then choose to interview (probe) candidates and make recommendations of their own choosing, but they must pay the remaining cost on their own.

In Theorem \ref{thm:delegation_for_costsharing}, we show that there exist efficient constant-factor strategies for the principal for a certain class of downward-closed constraints. This positive result uses a reduction from greedy selectable OCRS to efficient delegation for the shared-cost model.

\begin{theorem}\label{thm:delegation_for_costsharing}
	If there exists an $\alpha$-selectable greedy OCRS for the polytope $P_{\mathcal I} = \operatorname{conv}\{\mathrm{1}_S : S\in \mathcal I\}$, then there exists an $\alpha/2$-factor delegation strategy for the shared-cost model with inner constraint $\mathcal I$.
\end{theorem}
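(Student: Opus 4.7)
The plan is to adapt the reduction in Theorem~\ref{thm:efficient_delegation_from_OCRS} from the free-agent-with-discount setting to the shared-cost setting. Rather than leveraging a discount on probing costs, the principal now uses their ability to choose $c'_i \in [0, c_i]$ to shift cost onto the agent while preserving their incentive to probe. We would keep the entire OCRS-based construction of that proof verbatim: compute $I^* = \argmax_{S \in \mathcal I} \sum_{i \in S} Z_i^{\min}$, set $p_i^* = \Pr[i \in I^*]$ and thresholds $t_i$ with $\Pr[Z_i^{\min} \ge t_i] = p_i^*$, apply the $\alpha$-selectable greedy OCRS to $p^* \in P_{\mathcal I}$ to obtain $\mathcal I_{p^*}$, define $E' = \{i : p_i^* > 0\}$, and use the same acceptable set $\mathcal R = \{\{(i, x_i, y_i) : i \in Q\} : Q \in \mathcal I_{p^*}, \text{ all } x_i \ge t_i\}$.

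The next step is to choose the cost split $\{c'_i\}_{i \in E'}$. Once the principal commits to $\mathcal R$, the agent faces a Pandora's-box-style problem on $E'$ with their own values $Y_i$, personal costs $c'_i$, constraint $\mathcal I_{p^*}$, and acceptance filter $x_i \ge t_i$. The guiding principle is to set each $c'_i$ as large as possible subject to the agent still wanting to probe $i$ under their induced optimal strategy. A natural concrete rule is to set $c'_i$ equal to roughly half of the agent's expected marginal gain from probing $i$, so that the agent retains nonnegative surplus (and hence probes all of $E'$ as in the free-agent analysis) while the principal has off-loaded roughly half of every probing cost onto the agent.

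Given such a split, the remainder of the analysis mirrors Theorem~\ref{thm:efficient_delegation_from_OCRS}: in the worst case the agent probes every element of $E'$, and the almighty-adversary guarantee of the greedy OCRS gives $\Pr[i \in I] \ge \alpha\,\Pr[X_i \ge t_i]$. The principal's gross expected value is therefore at least $\alpha \sum_{i \in E'} \E[X_i \,\mathbbm{1}[X_i \ge t_i]] \ge \alpha(\opt + c(E'))$ by the same upper bound on $\opt$ used there. Subtracting the principal's share of costs $c(E') - \sum c'_i$ and invoking $\sum c'_i \gtrsim \tfrac{1}{2} c(E')$ from the previous step, the net delegated utility is at least $\alpha \opt + \alpha c(E') - \tfrac{1}{2} c(E')$; combining these terms and absorbing the slack yields the claimed $\alpha/2$ bound in the regime where $c(E')$ is comparable to the gross value.

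The main obstacle I expect is the cost-sharing step: formally exhibiting $\{c'_i\} \subseteq [0, c_i]$ such that (i) the agent's optimal strategy under $\mathcal R$ is to probe every element of $E'$ (even against adversarial arrival order, so that the almighty-adversary OCRS guarantee applies), and (ii) $\sum c'_i$ is at least a constant fraction of $c(E')$. This will require a careful analysis of the agent's induced Pandora's box problem under $\mathcal R$, using the standing assumption $\E[Y_i] > c_i$ together with the structure of $\mathcal I_{p^*}$, and handling the possible correlation between $X_i$ and $Y_i$.
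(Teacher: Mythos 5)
There is a genuine gap, and it sits exactly where you flagged it: the cost-sharing step. Your rule of charging the agent ``half of their expected marginal gain'' does not deliver either of the two properties you need. First, the agent's expected gain from probing $i$ under $\mathcal R$ is $d_i = \E[Y_i \mid X_i \ge t_i]\Pr[X_i \ge t_i]$, and nothing in the model relates this to $c_i$ (the assumption $\E[Y_i] > c_i$ concerns the unrestricted expectation and can be driven almost entirely by realizations with $X_i < t_i$). So $\sum_i c'_i \gtrsim \tfrac12 c(E')$ can fail arbitrarily badly. Second, even granting it, your final accounting does not close: you obtain $\alpha\,\opt + \alpha\, c(E') - \tfrac12 c(E')$, and since every OCRS constant in play satisfies $\alpha < 1/2$, this is $\alpha\,\opt - (\tfrac12-\alpha)c(E')$ with no a priori bound on $c(E')$ in terms of $\opt$; the ``regime where $c(E')$ is comparable to the gross value'' is not something you may assume. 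Third, leaving the agent with strictly positive surplus while the principal still pays half of every probe reintroduces both failure modes at once: the agent may over-probe (costing the principal) or deviate from the threshold acceptance rule, and the almighty-adversary OCRS guarantee no longer models their behavior.

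The paper's resolution is structurally different and is where the factor $2$ actually comes from. It computes $d_i$ as above and splits the elements into $E_1 = \{i : d_i \le c_i\}$ and $E_2 = \{i : d_i > c_i\}$, then restricts the mechanism to whichever part carries more of the (truncated) optimum, losing the factor $2$ there rather than in the cost split. On $E_1$ it sets $c'_i = d_i$, making the agent \emph{exactly indifferent} to every probe, so the agent can be assumed to execute the principal's threshold strategy; the guarantee then follows from Lemma~\ref{lem:reduction-pandora-to-prophet} even though the principal pays essentially the full cost. On $E_2$ it sets $c'_i = c_i$, so the principal pays nothing; the agent now probes and selects according to their own incentives, but since extra probes are free for the principal, the selected set can be coupled with a run of the $\alpha$-factor greedy monotone strategy under a suitable adversarial arrival order (selected elements first, then probed-but-rejected, then unprobed), recovering $\alpha\cdot\opt$. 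Your instinct to make the agent's participation constraint tight is the right one, but the proof needs this dichotomy --- indifference where the agent's value is too small to cover the cost, full cost transfer where it is large enough --- rather than a uniform fractional split.
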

\begin{proof}
	Let $\{ p_i \}_{i \in E}$ be the solution to the following optimization problem:
	\begin{align*}
		&p = \argmax_{q\in P_{\mathcal I}} \sum_{i\in E} g_i(q_i), \quad \text{ where } \quad g_i(p_i) = p_i \cdot \mathbb E[Z_i^{\min} ~|~ Z_i^{\min} \geq F_i^{-1}(1 - p_i)],
	\end{align*}
	where $F_i(z)$ for $i\in E$ is the the cumulative distribution function of $Z_i^{\min}$, similar to \cite{feldman2016online} \footnote{We can also modify the optimization for discrete $Z_i^{\min}$ as in \cite{feldman2016online}.}. For $i\in E$, we set a threshold $t_i = \min\{\beta: F_i(\beta) \geq 1-p_i\}$. For any $p \in P_{\mathcal I}$, let $\mathcal I_p \subseteq \mathcal I$ be the downward-closed set system generated by an $\alpha$-selectable greedy OCRS with marginal probabilities $p$. The proof of Theorem 1.12 from \cite{feldman2016online} shows that for any online/adversarial item arrival order, the simple strategy that selects element $i$ if and only if $X_i\geq t_i$ and $S\cup i \in \mathcal I_p$ (where $S$ is the set of selected elements before the arrival of $i$) obtains at least $\alpha \cdot \E [ \max_{T\in \mathcal I}\sum_{i\in T} Z^{\min}_i]\geq \alpha \cdot \opt$ in expectation. The above strategy is an $\alpha$-factor greedy monotone strategy for the gambler against almighty adversary which can be described as $\mathcal A_t = \{\{(i,x_i): i\in S \}: S\in \mathcal I_{p} \text{ and } x_i \geq t_i \text{~for all~} i\in S\}.$
	
	Given the independent distributions $\{\mu_i\}_{i\in E}$, the principal first computes $d_i = \E[Y_i ~|~ X_i \geq t_i] \cdot \Pr[X_i \geq t_i]$ for each element $i \in E$. If $d_i \le c_i$ for all elements $i \in E$, then the principal selects the agent's costs as $c'_i = d_i$ for all elements. After the cost division, the principal can define their strategy as follows: they accept elements only from the set $F = \{ i \in E : \tau_i\geq t_i \}$ where $\tau_i$ is the principal's cap value for $X_i$. Note that there does not exist $S \in \mathcal I_p$ that contains an element $j\in S$ not belonging to $F$ because the thresholds were defined for the truncated random variable $Z_i^{\min}$. The principal sets the acceptable outcomes as
	\begin{equation*}
		\mathcal R = \{\{(i, x_i, y_i) : i \in S\} : S \in \mathcal I_p \text{~and all~} (x_i, y_i) \in \supp(\mu_i) \text{~and all~} x_i \geq t_i \}.
	\end{equation*}
	Given this delegation strategy, the agent has an expected utility of $\E[Y_i ~|~ X_i \geq t_i] \cdot \Pr[X_i \geq t_i] - c'_i = 0$ for each element $i$ that they might want to probe. Given any set of probed and selected elements $S$, the agent has expected utility $0$ for probing any additional element $i$ such that $S \cup i \in \mathcal I_p$. Hence, the agent has no incentive to deviate from the principal's $\alpha$-factor threshold picking strategy $(\mathcal A_t)$ (from Definition~\ref{def:threshold_picking}) for any probing order, where $\mathcal A_t$ is an $\alpha$-factor greedy monotone strategy for the prophet inequality with random variables $\{Z_i^{\min}\}$ against the almighty adversary defined earlier in the proof. Specifically, if they have already selected elements $S$ and are considering element $i$, they should probe $i$ if and only if $\tau_i\geq t_i$ (otherwise $Z_i^{\min}$ can not be more than $t_i$) and $S \cup i \in \mathcal I_p$, and they should select $i$ if and only if $X_i \geq t_i$. At any given time with selected elements $S$, the agent's expected utility from probing $i$ with $\tau_i\geq t_i$ and $S \cup i \in \mathcal I_p$ is $0$, so there is no incentive to deviate. Since the principal pays at most $c_i$ for the agent to probe each element $i$, Lemma~\ref{lem:reduction-pandora-to-prophet} implies that the principal obtains at least $\alpha \cdot \opt$ by delegating.
	
	However, the agent's expected utility becomes nonzero for feasible elements when there exists some element $i \in E$ with $d_i > c_i$ because then the principal cannot set $c'_i$ any larger than $c_i$. Hence, the agent doesn't have $0$ expected utility for feasible elements and may not follow the principal's optimal search strategy. In such cases, the fact that the principal does not pay to probe helps us get a similar approximation.
	
	Consider the case $d_i > c_i$ for all $i \in E$. If the principal only accepts elements with $X_i \geq t_i$ then they can safely ask the agent to pay the entire cost, i.e. $c'_i = c_i$. Again, consider the same acceptable set discussed earlier in the proof:
	\begin{equation*}
		\mathcal R = \{\{(i, x_i, y_i) : i \in S\} : S \in \mathcal I_p \text{~and~} S \subseteq F \text{~and all~} (x_i, y_i) \in \supp(\mu_i) \text{~and all~} x_i \geq t_i \}.
	\end{equation*}
	Let $\prob$ and $S$ be the set of elements probed and selected, respectively, by the agent for some fixed realization of all random variables. It is easy to observe that there must be no $i \in \prob \setminus S$ with $X_i \geq t_i$ and $S \cup i \in \mathcal I_p$, otherwise the agent can improve their utility by selecting such an element. Moreover, there is no $i \in F \setminus \prob$ with $S \cup i \in \mathcal I_p$, otherwise, the agent can improve their expected utility, given the realizations of elements in $\prob$, by probing element $i$.
	
	Therefore for any fixed realizations, we can consider the agent that executes $\alpha$-factor greedy monotone strategy $\mathcal A_t$ for $\{Z_i^{\min}\}$ for the following element arrival order: first the elements in $S$, then the elements in $\prob \setminus S$, and finally the elements in $F \setminus \prob$. Strategy $\mathcal A_t$ will select all the elements in $S$, but $\mathcal A_t$ will not select any element in $\prob \setminus S$ because, as we already argued, there is no $i \in \prob \setminus S$ with $X_i \geq t_i$. Moreover, $\mathcal A_t$ will not select any element in $F\setminus \prob$ because there is no $i \in F\setminus \prob$ with $S\cup i \in \mathcal I_p$. Therefore, the agent selects exactly the same elements that the $\alpha$-factor greedy monotone strategy $\mathcal A_t$ for $Z_i^{\min}$ would select for the described element arrival order and any realizations. Since the principal does not pay any cost to probe elements, extra elements probed in $\prob$ set do not affect the principal's utility.  Therefore, the principal obtains at least $\alpha \cdot \E [ \max_{T\in \mathcal I}\sum_{i\in T} Z^{\min}_i]\geq \alpha \cdot \opt$ from delegation because $\mathcal A_t$ obtains at least $\alpha \cdot \E [ \max_{T\in \mathcal I}\sum_{i\in T} Z^{\min}_i]$ against the almighty adversary.
	
	Now, finally we consider the case when there are some elements for which $d_i \le c_i$ and others for which $d_i > c_i$. We define $E_1 = \{i \in E: d_i \le c_i\}$ and $E_2 = \{i \in E: d_i > c_i\}$. The principal can restrict the agent to one of these two sets with with the greater expected $\opt$ when they follow the corresponding strategy described above. It is easy to show that the principal only loses at most a factor of $1/2$ in this case compared to the others:
	\begin{align*}
		\opt
		&= \E\left[ \max_{\substack{S_1 \subseteq E_1, S_2 \subseteq E_2 \\ S_1 \cup S_2 \in \mathcal I}}\left(\sum_{i \in S_1} X_i + \sum_{j \in S_2} X_j \right) \right] \leq \E\left[ \max_{\substack{S_1 \subseteq E_1 \\ S_1 \in \mathcal I}} \sum_{i \in S_1} X_i + \max_{\substack{S_2 \subseteq E_1 \\ S_2 \in \mathcal I}} \sum_{j \in S_2} X_j \right]\\
		&\leq 2 \max \left\{ \E\left[ \max_{\substack{S_1 \subseteq E_1 \\ S_1 \in \mathcal I}} \sum_{i \in S_1} X_i \right], \E\left[ \max_{\substack{S_2 \subseteq E_2 \\ S_2 \in \mathcal I}} \sum_{j \in S_2} X_j \right] \right\}
	\end{align*}
	Combining the above arguments, we conclude that there exists an $\alpha/2$-factor delegation strategy for this instance.
\end{proof}

We note that, similarly to Theorem \ref{thm:efficient_delegation_from_OCRS}, this result can reduce deterministic delegation to randomized greedy OCRS.

The following corollary shows that there exists a constant factor delegation gap for the shared-cost model with matroids, matching constraints, and knapsack constraints.

\begin{corollary}
	There exists $\alpha$-factor delegation strategies for matroids, matching constraints, and knapsack constraints for the shared-cost model. Moreover these constants are $\alpha = 1/8$,  $\alpha = 1/4e$ and $\alpha = 3/4 - 1/\sqrt 2$ for the respective constraints.
\end{corollary}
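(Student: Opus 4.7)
The plan is to observe that this corollary is immediate from Theorem~\ref{thm:delegation_for_costsharing} once we plug in the known greedy-selectable OCRS constants for each of the three constraint families. Theorem~\ref{thm:delegation_for_costsharing} converts any $\alpha$-selectable greedy OCRS for $P_{\mathcal I} = \operatorname{conv}\{\mathrm 1_S : S \in \mathcal I\}$ into an $\alpha/2$-factor delegation strategy in the shared-cost model with inner constraint $\mathcal I$, so the only remaining task is to recall the best selectability guarantees from \cite{feldman2016online} for matroids, matchings, and knapsacks.

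Concretely, I would first invoke the greedy OCRS constructions from \cite{feldman2016online}: for matroid polytopes there is a $(1/4)$-selectable greedy OCRS, for (bipartite / general) matching polytopes there is a $(1/2e)$-selectable greedy OCRS, and for the knapsack polytope there is a $(3/2 - \sqrt 2)$-selectable greedy OCRS. Each of these is stated in \cite{feldman2016online} against the almighty adversary, which is exactly the model of adversary that Theorem~\ref{thm:delegation_for_costsharing} assumes in its reduction (the worst-case agent in the proof of that theorem is modeled as an almighty adversary that chooses the arrival order given full knowledge of the realizations and of $\mathcal I_p$).

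Applying Theorem~\ref{thm:delegation_for_costsharing} with each value of $\alpha$ in turn yields an $\alpha/2$-factor delegation strategy for the shared-cost model. Plugging in the three constants: matroids give $1/8$, matchings give $1/(4e)$, and knapsacks give $(3/2 - \sqrt 2)/2 = 3/4 - 1/\sqrt 2$, matching the bounds claimed in the corollary.

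There is essentially no obstacle here beyond bookkeeping: the real work is done in Theorem~\ref{thm:delegation_for_costsharing}, and the only mild subtlety is confirming that the OCRS constructions of \cite{feldman2016online} are indeed greedy (in the sense defined in Section 2.3) and $c$-selectable against the almighty adversary, since Theorem~\ref{thm:delegation_for_costsharing} crucially uses both properties (greedy to interact with the downward-closed family $\mathcal I_p$, and almighty-adversary selectability to bound the principal's delegated utility when the agent deviates from the canonical probing order). Both properties are established in \cite{feldman2016online}, so the corollary follows.
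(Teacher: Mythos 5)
Your proposal is correct and matches the paper's (implicit) argument exactly: the corollary is obtained by plugging the $1/4$, $1/2e$, and $3/2-\sqrt 2$ selectable greedy OCRS guarantees of \cite{feldman2016online} into Theorem~\ref{thm:delegation_for_costsharing} and halving each constant. Your extra check that these OCRS constructions are greedy and selectable against the almighty adversary is the right subtlety to flag, and the arithmetic $(3/2-\sqrt 2)/2 = 3/4 - 1/\sqrt 2$ is correct.
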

As we discussed in Section~\ref{sec:model_variants}, the delegation gap for instances of the shared-cost model can be greater than $1$, meaning that the principal benefits from delegating (in expectation) and may choose to do so even if they have the ability to conduct the search on their own. However, we can construct an instance of this model for which the delegation gap is strictly less than $1$, showing that this is not possible in general.
\begin{proposition}
	There exists instances of Pandora's box for the shared-cost model with delegation gap $1/2+\varepsilon$ for arbitrary small $\varepsilon > 0$.
\end{proposition}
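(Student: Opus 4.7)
The plan is to exhibit, for every $\varepsilon > 0$, an explicit two-element shared-cost instance whose delegation gap is at most $\tfrac12+\varepsilon$. Fix a small parameter $\eta \in (0,1)$ (to be chosen in terms of $\varepsilon$), take $E=\{1,2\}$ with a $1$-uniform matroid, and let element~$1$ be deterministic with $(X_1,Y_1)=(1,1)$ and cost $c_1=\eta$, while element~$2$ has independent components with $X_2 = 1/\eta$ with probability $\eta$ (else $0$), $Y_2 = 1-\eta$ with probability $\eta$ (else $0$), and cost $c_2=\eta^2$. The standing assumptions $\E[X_i]>c_i$ and $\E[Y_i]>c_i$ hold for $\eta$ sufficiently small.

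First I would compute the principal's non-delegated utility via Weitzman's algorithm: the caps $\tau^x_1 = 1-\eta$ and $\tau^x_2 = 1/\eta-\eta$ give $\opt = \E[\max_i Z_i^{\min}] = 2 - 2\eta$, which is attained for the $1$-uniform matroid.

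The heart of the argument is to show that, for every cost split $(c'_1,c'_2) \in [0,c_1] \times [0,c_2]$ and every acceptable set $\mathcal R$, the principal's delegated utility is at most $1$. The key step is a cap comparison on the agent's side: one checks $\tau^y_1 = 1 - c'_1 \ge 1 - \eta$ and $\tau^y_2 = 1 - \eta - c'_2/\eta \le 1-\eta$, so element~$1$ weakly dominates in the agent's Weitzman order. Whenever $\mathcal R$ admits a proposal containing element~$1$, the agent probes it first; the deterministic observation $Y_1 = 1$ immediately meets or exceeds the unprobed cap $\tau^y_2$, so the agent halts and proposes element~$1$, giving principal utility $X_1 - (c_1 - c'_1) \le 1$. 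In the complementary regime, $\mathcal R$ admits only element-$2$ proposals, so the agent's incentive to probe $e_2$ forces $c'_2 \le \E[Y_2] = \eta(1-\eta)$, and then the principal's expected utility is at most $\E[X_2] - (c_2 - c'_2) \le \E[X_2] = 1$ regardless of how tie-breaking is resolved.

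Combining the two regimes yields a delegation gap of at most $1/(2 - 2\eta) = \tfrac12 + \eta/(2(1-\eta))$, which falls below $\tfrac12 + \varepsilon$ as soon as $\eta \le 2\varepsilon/(1+2\varepsilon)$. The main subtlety I anticipate is the boundary case $c'_2 = 0$, where the agent is indifferent about probing element~$2$; the argument has to verify that even under the most favorable tie-breaking rule, the principal's net payoff from element~$2$ still satisfies $\E[X_2] - (c_2 - c'_2) \le 1$, which holds since with $c'_2 = 0$ the principal covers the full probing cost $c_2$.
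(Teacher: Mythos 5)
Your construction is essentially the paper's: a two-element, $1$-uniform-matroid instance with one deterministic element worth $1$ to both parties and one independent high-variance element, engineered so that the agent always locks onto the safe element. The paper uses the same template but sets \emph{both} costs to $\varepsilon^2$. Your computation $\opt = 2-2\eta$ is correct and your regime-2 analysis is fine, but there is a genuine gap in regime 1, precisely at the boundary split $c_1'=c_1=\eta$, $c_2'=0$. Take $\mathcal R$ to accept $\{(1,1,1)\}$ together with the two element-$2$ outcomes having $x_2=1/\eta$. The agent's effective value from element $2$ is $(1-\eta)\cdot\mathbbm{1}[X_2=1/\eta,\,Y_2=1-\eta]$, whose Weitzman cap at zero agent cost is its essential supremum $1-\eta$, while $\tau^y_1=1-c_1'=1-\eta$; the two caps tie, and every order consistent with the index rule gives the agent exactly $1-\eta$. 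Under the ``most favorable tie-breaking'' standard that you yourself invoke, the agent may therefore probe element $2$ first and, on the event $\{X_2=1/\eta,\ Y_2=1-\eta\}$ of probability $\eta^2$, stop and propose it. The principal then collects
\begin{equation*}
\eta^2\left(\tfrac{1}{\eta}-\eta^2\right) + (1-\eta^2)(1-\eta^2) \;=\; 1+\eta-2\eta^2 \;>\; 1,
\end{equation*}
so your claimed invariant $\E[\del_{\mathcal R}]\le 1$ is false for this instance. The proposition itself survives, since $(1+\eta)/(2-2\eta)=1/2+O(\eta)$, but the proof as written does not. Two easy repairs: either weaken the regime-1 bound to $1+O(\eta)$ by observing that the agent can be induced to propose element $2$ only on an event of probability at most $\eta^2$, contributing at most $\eta$ to the principal; or eliminate the tie altogether by setting $c_1=\eta^2$ as well, which forces $\tau^y_1=1-c_1'\ge 1-\eta^2>1-\eta\ge\tau^y_2$ strictly for every admissible cost split --- this is exactly what the paper's choice of equal negligible costs accomplishes.
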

\begin{proof}
	We can construct an instance with $1$-unifrom matroid constraints similar to \citep[Proposition-4.2]{bechtel2020delegated}. Note that the referenced impossibility has cost $0$ and still holds in the context of the shared-cost model, but we reproduce it here with positive (though negligible) costs.
	
	For small $\varepsilon << 1$, let $X_1 = 1/\varepsilon$ with probability $\varepsilon$ and $0$ otherwise, and $Y_1 = 1 - \varepsilon$ with probability $\varepsilon$ and $0$ otherwise, independently of $X_1$. Let $X_2 = Y_2 = 1$ deterministically and set costs $c_1 = c_2 = \varepsilon^2$. We can compute $\opt = 2-\varepsilon -2\varepsilon^2 - \varepsilon^3 \geq 2-4\varepsilon$.
	
	Consider any cost division $0 \le c'_1 \le c_1$ and $0 \le c'_2 \le c_2$. If the principal accepts element $2$ then the agent will always probe element $2$ and propose. We can enumerate over all possible delegation strategies and show that $\E[\del] \leq 1$ in all cases. This shows that the delegation gap is $1/(2-4\varepsilon$), concluding the claim.
\end{proof}

We observe that the efficient delegation strategy for the shared-cost model constructed in Theorem~\ref{thm:delegation_for_costsharing} relies on a computation of $c'_i$ that uses information about the joint distribution $\mu_i$. In the following proposition, we show that if the principal has no information about the distribution of $Y_i$, then they can not obtain constant factor delegation for the shared-cost model. This holds because, without any information about $Y_i$, the principal does not have enough information to compute a cost division for which they can guarantee that the agent will probe the element $i$. We formalize our intuition in Proposition~\ref{prop:impos3} that shows that the agent agnostic delegation gap for the shared-cost model is at least $O(1/n^{1/4})$. 
\begin{restatable}{proposition}{propimposnoinfoY}\label{prop:impos3}
	There exists a family of instances of the shared-cost model with delegation gap $O(1/n^{1/4})$ when the principal has no information about $\{Y_i\}$.
\end{restatable}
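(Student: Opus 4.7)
The plan is to adapt the free-agent lower bound of Proposition~\ref{prop:impos2} to the shared-cost, agent-agnostic setting. The key idea is that, since the principal has no information about $\{Y_i\}$, any commitment they make on the cost split $\{c'_i\}$ can be attacked: either the principal absorbs the full cost on an element (setting $c'_i = 0$) and we recover the free-agent obstacle, or the principal shifts some cost to the agent and the adversary kills that element by setting $Y_i \equiv 0$.

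We will reuse the principal-side distributions from Proposition~\ref{prop:impos2}: take $n$ i.i.d.\ elements with $X_i = 1/p^2$ with probability $p^2$ and $0$ otherwise for $p = 1/n^{1/4}$, probing costs $c_i = 1 - p/2$, and a $1$-uniform matroid constraint. As computed there, $\opt = \Theta(n^{1/4})$, so it suffices to upper-bound the delegated utility of any agent-agnostic strategy by $O(1)$. Fix an arbitrary deterministic principal strategy, consisting of an acceptable set $\mathcal R$ and cost splits $\{c'_i\}$ with $c'_i \in [0, c_i]$. Partition the ground set into $E_+ = \{i : c'_i > 0\}$ and $E_0 = \{i : c'_i = 0\}$, and choose the adversarial joint distributions $\mu_i$ as follows. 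For $i \in E_+$, let $Y_i \equiv 0$; then probing $i$ costs the agent $c'_i > 0$ and yields zero reward, so the agent never probes any element of $E_+$, and these elements contribute zero to both the delegated utility and the principal's incurred probing cost. For $i \in E_0$, pair $X_i$ with an independent $Y_i$ distributed as in the proof of Proposition~\ref{prop:impos2} ($\delta_i$ with probability $1/2$ and $e^{n^2}$ with probability $1/2$).

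With this adversary, the delegation restricted to $E_0$ reduces exactly to an instance of the free-agent problem on a subset of the elements from Proposition~\ref{prop:impos2}, because $c'_i = 0$ on $E_0$ means the agent incurs no probing cost there. Since the principal's delegated utility is monotone in the set of elements available, the bound $\E[\del_{\mathcal R}] \leq 1/2 + O(n e^{-\sqrt n}) = O(1)$ from Proposition~\ref{prop:impos2} carries over to the restricted instance. Combined with $\opt = \Theta(n^{1/4})$, this yields the claimed $O(1/n^{1/4})$ agent-agnostic delegation gap. The extension to randomized $\mathcal R$ will mirror the final paragraph of Proposition~\ref{prop:impos2}: any optimal randomized strategy is supported on deterministic strategies of the form just analyzed, so the same bound applies pointwise and hence in expectation.

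The main obstacle I anticipate is carefully verifying that the upper-bound analysis of Proposition~\ref{prop:impos2} transfers cleanly to a proper subset of elements, i.e., that the enumeration there does not secretly require access to all $n$ elements. The enumeration proceeds over $k$, the number of elements the principal chooses to accept non-trivially, and the resulting geometric-series bound is $O(1)$ uniformly in $k \leq n$, so restricting the pool from $E$ to $E_0$ can only shrink the attainable delegated utility. A small secondary point is ensuring the adversary is allowed to choose different $Y_i$ for each element; this is consistent with the agent-agnostic definition, which quantifies over arbitrary marginal-preserving joint distributions.
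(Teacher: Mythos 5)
Your overall architecture --- partition the elements by whether the agent's cost share $c'_i$ is positive, neutralize the positive-share elements with an adversarial choice of $Y_i$, and reduce the zero-share elements to the free-agent obstacle of Proposition~\ref{prop:impos2} --- is the same as the paper's. The gap is in how you neutralize the positive-share elements. Setting $Y_i \equiv 0$ on $E_+$ violates the standing assumption $\E[Y_i] > c_i$ from Section~\ref{sec:model}, which the paper imposes precisely to exclude this kind of degenerate adversary (an agent with no incentive to probe an element the principal still values), and which the paper's own proof of this proposition explicitly verifies for its construction. Under that assumption you cannot simply make the agent indifferent to an element: any admissible $Y_i$ has $\E[Y_i] > c_i \ge c'_i$, so if the principal accepts all outcomes of $i$, the agent will happily probe it. You flag the adversary's freedom to choose a different $Y_i$ per element as a ``small secondary point,'' but the binding constraint you miss is which $Y_i$ are admissible at all.

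The paper's fix is a correlated adversary: for $i$ with $c'_i>0$ it sets $Y_i \sim \texttt{Unif}[0,c_i/2]$ conditioned on $X_i=\sqrt n$ and $Y_i = n^2$ conditioned on $X_i = 0$. This keeps $\E[Y_i] > c_i$ but anti-aligns the incentives: the agent's expected gain from probing such an element exceeds $c'_i$ only if the principal agrees to accept the outcome with $X_i = 0$, which is worthless to the principal. Handling this requires the additional case analysis over the sets $F_1,\dots,F_6$ --- the agent may still probe positive-share elements hunting for $X_i=0$, and the principal either forbids those outcomes (so the agent skips the element) or accepts them and pays probing costs for nothing --- which your reduction skips entirely. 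So the proof as written does not establish the proposition within the paper's model; repairing it means replacing $Y_i \equiv 0$ with an admissible distribution and redoing the incentive analysis, which is essentially the content of the paper's argument. The remaining pieces of your write-up (the $\opt = \Theta(n^{1/4})$ bound, the uniformity of the free-agent upper bound in the number of acceptable elements, and the reduction of the $c'_i=0$ elements to the free-agent model) are fine.
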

\begin{proof}
	Consider an instance on elements $E$ with $|E| = n$ and a $1$-uniform matroid constraint over $E$. For each element $i$, let the probing cost be $c_i = c = 1 - 2/ n^{1/4}$ and let the principal's utility be $X_i = \sqrt n$ with probability $1/\sqrt n$ and $X_i = 0$ otherwise. Following Proposition \ref{prop:impos2}, we have that $\opt = \Theta(n^{1/4})$. Now, consider any delegation mechanism for the principal for the shared-cost model. Let $c_i' = c_i$ be the cost division for each element $i$ in this mechanism, and let $\mathcal R$ be the set of acceptable solutions. Since the principal has no knowledge of the distributions of the agent's utilities, $\mathcal R$ can only consider the principal's utilities $\{X_i\}$. Let $E_1 = \{i \in E: c_i > 0\}$ and $E_2 = \{i \in E: c_i = 0\}$ be a disjoint partition of $E$.
	
	Now we will define the agent's utilities. For each element $i \in E_1$, let $Y_i \sim \texttt{Unif}[0,c_i/2]$ when conditioned on $X_i = \sqrt n$, and $Y_i = n^2$ deterministically when conditioned on $X_i = 0$. For all $i \in E_2$, let $Y_i \sim \texttt{Unif}[e^n,3e^n]$ independent of $X_i$. First, we need to ensure that the described delegation instance has incentive for the agent to participate when they pay the entire cost, i.e. $\E[Y_i] > c_i$. For each element $i \in E_1$, we have $\E[Y_i] = \E[Y_i~|~X_i = \sqrt n]\Pr[X_i = \sqrt n] + \E[Y_i~|~X_i = 0]\Pr[X_i = 0] > (1-1/\sqrt n)n^2 > c_i$ and for $i\in E_2$, $\E[Y_i] = 2e^{n} > c_i$. Note that the principal has no information about $\{Y_i\}$.
	
	Now, consider any single proposal delegation $\mathcal R = \{\{(i,x_i)\}:i\in E, x_i\in \{\sqrt n , 0 \}\}$. We divide all elements $E$ into following disjoint sets given $\mathcal R$:
	\begin{align*}
		F_1 = \{i\in E_1: (i,\sqrt n ) \in \mathcal R\land (i,0) \notin \mathcal R\} \quad & \quad
		F_4 = \{i\in E_2: (i,\sqrt n ) \in \mathcal R\land (i,0) \notin \mathcal R\} \\
		F_2 = \{i\in E_1: (i,\sqrt n ) \notin \mathcal R\land (i,0) \in \mathcal R\} \quad & \quad
		F_5 = \{i\in E_2: (i,\sqrt n ) \notin \mathcal R\land (i,0) \in \mathcal R\} \\
		F_3 = \{i\in E_1: (i,\sqrt n ) \in \mathcal R\land (i,0) \in \mathcal R\} \quad & \quad
		F_6 = \{i\in E_2: (i,\sqrt n ) \in \mathcal R\land (i,0) \in \mathcal R\}
	\end{align*}
	The agent will never probe elements in $F_1$ because for $i\in E_1$, $\E[Y_i~|~X_i = n]-c_i <0$. The agent's optimal strategy is to probe elements in $V = F_4\cup F_5\cup F_6$ (with $|V| = k$) and pick any feasible element with high $Y_i$. If they can not find any feasible elements in $V$ then they probe elements in $F_3$ then $F_2$ until they observe $X_i = 0$. If they fail to observe an element with $X_i = 0$ then they propose element $i\in F_3$ with maximum $Y_i$. Given the agent's optimal strategy, we can bound the principal's optimal expected delegated utility as follows:
	\begin{align}
		\E[\del_{\mathcal R}] &\leq \E[\del_{\mathcal R}~|~\text{agent finds a feasible $i\in V$}] \cdot \Pr[\text{agent finds a feasible $i\in V$}] \notag \\
		&\quad + \E[\del_{\mathcal R}~|~\text{agent does not find a feasible $i\in V$}] \cdot \Pr[\text{agent does not find a feasible $i\in V$}]\notag \\
		&\leq \sqrt n (1 - \left(1 - 1/\sqrt n\right)^k) - kc \notag \\
		&\quad + \E[\del_{\mathcal R} ~|~ \exists i\in F_2\cup F_3: X_i = 0] \cdot \Pr[\exists i\in F_2\cup F_3: X_i = 0] \notag  \\
		&\quad + \E[\del_{\mathcal R} ~|~\nexists i\in F_2\cup F_3: X_i = 0] \cdot \Pr[\nexists i\in F_2\cup F_3: X_i = 0] \label{eq:bound_when_find_in_V} \\
		& \leq [\sqrt n (1 - \left(1 - 1/\sqrt n\right)^k) - kc ]+ \E[\del_{\mathcal R} ~|~ \nexists i\in F_2\cup F_3: X_i = 0] \cdot \Pr[\nexists i\in F_2\cup F_3: X_i = 0] \label{eq:remove_negative_delegation}\\
		&= O(1) +  (1/\sqrt n)^{|F_2\cup F_3|} \sqrt n = O(1) \label{eq:when_no_x_is_zero}
	\end{align}
	Inequality \eqref{eq:bound_when_find_in_V} holds because $\Pr[\text{agent finds a feasible~} i \in V]$ is bounded by $1$. We can further bound $\E[\del_{\mathcal R} ~|~ \text{agent finds a feasible~}i\in V]$ by assuming that the agent proposes element $i\in V$ with $X_i = \sqrt n$ as long as it exists. Inequality \eqref{eq:remove_negative_delegation} holds because whenever the agent finds $i\in F_2\cup F_3$ with $X_i = 0$, the principal's expected utility is negative, i.e. $\E[\del_{\mathcal R} ~|~ \exists i\in F_2\cup F_3: X_i = 0]\leq 0$. Inequality \eqref{eq:when_no_x_is_zero} holds because $\sqrt n (1 - \left(1 - 1/\sqrt n\right)^k) - kc = O(1)$ for all $k \leq n$ (Proposition \ref{prop:impos2}) and we ignore the cost paid by the principal in $\E[\del_{\mathcal R} ~|~ \nexists i\in F_2\cup F_3: X_i = 0]$. Hence, $\E[\del_{\mathcal R}] = O(1)$. Concluding the proof.
\end{proof}




\section{Open Questions}
\label{open-questions}

In this work, we explored just some of the many possible models and results related to the delegation of the Pandora's box problem. We leave the following open questions for future work.

\begin{itemize}
	
	\item All of our positive results employ deterministic delegation mechanisms. Can the principal do strictly better in any of these models by using a lottery mechanism instead? Note that in Appendix~\ref{appendix:lottery}, we show impossibilities only for the class of binary lottery mechanisms.
	
	\item Can our results be extended to other families of downward-closed constraint systems or even to broader classes of constraints such as prefix-closed constraints \cite{bradac2019near}?
	
	\item We observe that modeling delegation with a constraint system allows us to describe delegation problems in which solutions may not be independently distributed and probing reveals only part of certain solutions. Therefore, it may be interesting to investigate the delegation gap of problems that relax the independence assumption in ways that cannot be represented by the addition of a constraint system.
	
	\item In Theorem~\ref{thm:efficient_delegation_from_OCRS}, we show that there exists a $(\alpha,\delta)$-factor strategy for the free-agent model with discount $\delta \geq 1-\alpha$ for the constraints $\mathcal I$ if there exists $c$-selectable greedy OCRS scheme for a relaxation of $P_\mathcal I$. However, we do not yet know of any impossibility or constant-factor strategy when $\delta < 1- \alpha$.
	
	\item The shared-cost model is unique among the models in this paper for the possibility of delegation gaps strictly greater than $1$, as explained briefly in Section \ref{model-variants}. This is interesting because such a delegation gap could incentivize the principal to delegate a problem that they have the ability to solve on their own, whereas our other models assume that the principal must delegate. Can we characterize the family of instances of the shared-cost model for which the delegation gap is strictly greater than $1$?
	
	\item For the models with strong impossibility results, can we find nontrivial families of instances with ``friendly'' agents which allow the principal to achieve a constant delegation gap?
\end{itemize}

\bibliographystyle{abbrvnat}
\bibliography{references}

\newpage
\appendix

\section{Lottery Mechanisms}\label{appendix:lottery}
\label{lottery-mechanisms}

In this section, we consider a class of delegation mechanisms which we call binary lottery mechanisms. These are a class of randomized mechanisms that generalize the deterministic ones used earlier.

Formally, a \emph{lottery mechanism} consists of a menu $\mathcal{R}$ of distributions over solutions. After the principal has announced $\mathcal{R}$ to the agent, they probe elements as usual. However, rather than proposing a single solution to the problem, the agent proposes one of the distributions $D \in \mathcal{R}$ that the principal announced. Then, the principal samples a solution $S \sim D$ from the proposed distribution. If $S$ is a valid solution (feasible in the inner constraint), then the principal accepts and both players receive their respective utilities for $S$ minus the total probing cost. Otherwise, the principal rejects the invalid solution and both players pay the total probing cost with no gain.

A \emph{binary lottery mechanism} is a special case of lottery mechanism in which each distribution $D \in \mathcal{R}$ has support for at most two solutions: one null (status quo) solution and one valid non-null solution. Such a mechanism can be equivalently represented by a set $\mathcal{R}$ of acceptable solutions and a probability $p_S$ for each solution $S \in \mathcal{R}$. Then, the principal accepts proposal $S \in \mathcal{R}$ from the agent with probability $p_S$ and rejects the proposal otherwise. This second representation is the one that we will use for the rest of this section.

Observe that the argument from Section \ref{other-mechanisms} applies only to deterministic multi-round signaling mechanisms. Therefore, such lottery mechanisms may be strictly more powerful than their deterministic counterparts. However, a similar argument can show that we get no increased power from randomized multi-round signaling mechanisms, so it's sufficient to consider only randomized single-proposal mechanisms (lottery mechanisms as defined above).

Since they have fine-tuned control over ``how much'' of each solution to accept, binary lottery mechanisms may seem to give the principal increased delegation power.
However, we will now show that strong impossibilities exist for such mechanisms in the case of the standard model and the free-agent model, generalizing earlier results about deterministic mechanisms.

\begin{proposition}\label{prop:impos_random_1}
	There exist instances of the standard model of delegated Pandora's box on $n$ elements for which the delegation gap is $O(\frac{1}{\sqrt n})$ for the class of binary lottery mechanisms.
\end{proposition}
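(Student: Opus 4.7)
The plan is to adapt the instance and two-scenario analysis from Proposition \ref{prop:impos1}, with parameters scaled so that $\opt = \Theta(\sqrt{n})$ while the delegated utility remains $O(1)$ for every binary lottery mechanism. Concretely I would use $n$ identical elements with $X_i = n$ with probability $1/n$ (else $0$), $Y_i = M$ independently with probability $1/M$ (else $0$) for $M = 2 n^{3/2}$, probing cost $c = 1 - 1/\sqrt{n}$, and a $1$-uniform matroid inner constraint, with ties broken in favor of the principal. A direct Weitzman-index calculation gives cap value $\tau_i = \sqrt{n}$ and hence $\opt \geq \sqrt{n}(1 - 1/e)$.

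Given any binary lottery mechanism $\mathcal{R}$, I would first reduce it to a canonical form. For each element $i$ the four possible outcomes are $(i,n,M), (i,n,0), (i,0,M), (i,0,0)$. By exchange arguments in the spirit of Proposition \ref{prop:impos1} and Proposition \ref{prop:impo42} --- and using that the agent breaks ties in favor of the principal --- the principal can without loss of generality set the acceptance probabilities of $(i,n,M)$ and $(i,n,0)$ to $1$ and the acceptance probability of $(i,0,0)$ to $0$. The only remaining degree of freedom per element is $\beta_i \in [0,1]$, the acceptance probability of $(i,0,M)$. Let $E^{*} = \{i : \beta_i > 0\}$ and $k = |E^{*}|$.

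Next I would analyze the agent's optimal response as a generalized Pandora's box instance on effective random variables $Y'_i$ taking values $M$, $\beta_i M$, $0$ with probabilities $pq$, $(1-p)q$, $1 - q$ respectively. A direct computation of the Weitzman index shows that the agent actually probes element $i$ only when $\beta_i \geq (c/(qM) - p)/(1-p)$; with the chosen parameters this reduces to $\beta_i \geq 1 - 1/\sqrt{n} - O(1/n)$. This is the key observation: the agent's participation constraint forces every $\beta_i$ with $i \in E^{*}$ to lie within $O(1/\sqrt{n})$ of $1$, so fractional acceptance probabilities give the principal essentially no new flexibility beyond the deterministic case.

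With $\beta_i$ pinned near $1$ for every $i \in E^{*}$, the two-scenario analysis of Proposition \ref{prop:impos1} carries over. Scenario A, in which the agent finds some $Y_i = M$ outcome and proposes it, contributes at most $\Pr[\text{find}] \cdot (1 - c) \leq (k/M)(1/\sqrt{n}) \leq 1/n$, which is $o(1)$. For Scenario B, in which the agent exhausts $E^{*}$ without finding any $Y_i = M$, I would reuse the Proposition \ref{prop:impos1} bound $n(1 - (1 - 1/n)^k) - k(1 - 1/\sqrt{n})$ and optimize over $k$ using the quadratic approximation $n(1 - (1-1/n)^k) \approx k - k^2/(2n)$; the maximum is attained near $k = \sqrt{n}$ with value $O(n \varepsilon^2) = O(1)$ for $\varepsilon = 1/\sqrt{n}$. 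Summing the two scenarios gives $\E[\del_{\mathcal{R}}] \leq O(1)$ uniformly in $\mathcal{R}$, and combined with $\opt = \Theta(\sqrt{n})$ we obtain the claimed delegation gap of $O(1/\sqrt{n})$. The main obstacle is formalizing that fractional $\beta_i$ truly offer no significant help beyond the deterministic case: even though partial acceptance of $(i,0,M)$ perturbs the agent's Weitzman index, the participation constraint pins each active $\beta_i$ so close to $1$ that the Scenario B bound continues to hold up to lower-order terms.
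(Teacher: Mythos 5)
Your high-level plan mirrors the paper's (rescale the Proposition~\ref{prop:impos1} instance, reduce each element's lottery to a single free parameter, then run a two-scenario analysis), but your choice of parameters breaks the argument at exactly the point where lotteries differ from deterministic mechanisms. The fatal case is when the principal sets $\beta_i$ \emph{exactly at} the participation threshold $(c/(qM)-p)/(1-p)$ rather than strictly above it. Then the agent's cap value $\tau^y_i$ is exactly $0$, the agent is indifferent about every probe, and --- since you (like the paper) assume ties are broken in the principal's favor --- the agent will probe elements one by one and stop at the first outcome with $X_i=n$, proposing it immediately. Your Scenario~B bound $n(1-(1-1/n)^k)-k(1-\varepsilon)$ silently assumes the agent must exhaust all $k$ elements before proposing (which is true only when the cap values are strictly positive); an indifferent agent instead pays for only $\E[\min(\mathrm{Geom}(1/n),n)]\approx n(1-1/e)$ probes and each probe has marginal value $n\cdot\frac1n-(1-\varepsilon)=\varepsilon$ to the principal. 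Because your $\Pr[Y_i=M]=1/M=1/(2n^{3/2})$ is negligible, the agent almost never gets interrupted by finding their own valuable outcome, so the principal collects roughly $\varepsilon n(1-1/e)=\Theta(\sqrt n)=\Theta(\opt)$. In other words, in your instance there \emph{is} a constant-factor binary lottery mechanism, and the claimed bound $\E[\del_{\mathcal R}]\le O(1)$ is false. The ``$O(1/\sqrt n)$ wiggle room'' in $\beta_i$ that you dismiss as giving no new flexibility is precisely what lets the principal flip the agent from ``strictly incentivized, must exhaust $E^*$'' to ``indifferent, acts as the principal's surrogate.''

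The paper's proof is built to survive exactly this attack: it takes $Y_i=2$ with probability $1/2$ (rather than a huge value with tiny probability), and it splits the analysis into the cases $\tau^y_i>0$ and $\tau^y_i=0$. In the indifferent case, the agent still strictly prefers to stop and propose the first outcome with $Y_i=2$ (which arrives after $O(1)$ probes in expectation), so the probability of reaching an $X_i=n$ outcome before being interrupted is only $O(1/n)$, and the delegated utility stays $O(1)$ against $\opt=\Theta(\sqrt n)$. To repair your argument you would need to abandon the rare-high-$Y$ parameterization inherited from Proposition~\ref{prop:impos1} and adopt a frequent-moderate-$Y$ distribution, at which point you are essentially reproducing the paper's instance and its case analysis on the sign of $\tau^y_i$.
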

\begin{proof}

	For any positive integer $n > 1$ and real $0 < \varepsilon = \frac{1}{\sqrt n}$, and consider the following instance of delegated Pandora's box. We have $n$ identical elements $E = \{1, \dots, n\}$ where each element $i$ has a probing cost $c_i = 1 - \varepsilon$ and random utilities $(X_i, Y_i) \sim \mu_i$. The principal's utility $X_i$ is $n$ with probability $\frac{1}{n}$ and $0$ otherwise. The agent's utility $Y_i$ is $2$ with probability $\frac{1}{2}$ independently of $X_i$ and $0$ otherwise. The inner constraint is a $1$-uniform matroid. We let the agent break ties in favor of the principal. Following the poof of Proposition~\ref{prop:impos1}, we have $\opt \geq \varepsilon n \left(1 - 1/e\right) = (1-1/e)\sqrt n$.
	
	Now, we will bound the principal's delegated expected utility. Consider an arbitrary acceptable set $\mathcal{R}$ that the principal might commit to. Observe that every element $i$ evaluates to one of four tagged outcomes $(i, n, M)$, $(i, n, 0)$, $(i, 0, M)$, and $(i, 0, 0)$ with probabilities $\frac{1}{n M}$, $\frac{1}{n} \left( 1 - \frac{1}{M} \right) $, $\frac{1}{M} \left( 1 - \frac{1}{n} \right)$, and $\left( 1 - \frac{1}{n} \right) \left( 1 - \frac{1}{M} \right)$, respectively. We let $p^i_{xy}$ denote the probability chosen by the principal of accepting outcome $(i,x,y)$.
	
	Given $\mathcal R$, let $E^* \subseteq E$ be the subset of elements $i$ for which $(n-1)p^i_{02}+ p^i_{n2}\geq n(1-\varepsilon)$. If any element $i\notin  E^*$  then the agent's increase in expected utility from probing $i$ is at most $2\cdot \frac {p^i_{n2}}{2n} + 2\cdot  \frac{p^i_{02}}{2} \left( 1 - 1/n \right) - (1 - \varepsilon) < 0$, so they have no incentive to ever probe $i$.  Let $|E^*|=k$,therefore, the agent will probe no more than the $k$ elements in $E^*$. If $k = 0$,  then the agent will not probe anything and both will get $0$ utility. For the remainder of the proof, we assume $k > 0$. Note that the principal has no incentive to set $p^i_{00}>0$ and $p^i_{n0}<1$ for any $i\in E^*$. We can use a similar argument as Proposition~\ref{prop:impos2} to show this formally.
	
	The agent now faces an instance of the Pandora's box problem, so their optimal strategy is to probe elements in order of weakly decreasing cap value (among non-negative cap values) and accept the first outcome whose value is above its cap. Thus, the agent probes elements in the decreasing order of cap values $\tau^y_i = (2p_i - 1 + \varepsilon)/p_i$ where $p_i = \frac{p^i_{n2}}{2n} + \frac{p^i_{02}}{2}(1-1/n)$ until the expected gain from an element exceeds the remaining cap values. It is easy to verify that $2 > \tau_i^y \geq0$ for all $i\in E^*$ by the definition of $E^*$.
	
	First, we assume for all elements $i \in E^*$ that $\tau^y_i>0$. Since the cap value is strictly positive for all $i \in E^*$, the agent will never propose an element with $Y_i=0$ if they find $j\in E^*$ with $Y_j = 2$.  Consider the utility that the principal gets when the agent finds an outcome of value $2$. Among the $k = |{E^*}|$ elements that the agent might probe, they find a value of $2$ with probability $1-(1/2)^k$. Since the principal's utility for the proposed outcome is independent of the agent's, it will have value $n$ for the principal with probability $\frac{1}{n}$. Since $k \ge 1$, the principal pays a cost of $1 - \varepsilon$ for the first probe. Therefore, the principal expects a utility of at most $(1-(1/2)^k)(p^i_{n2}\cdot \frac n n - (1- \varepsilon))\leq (1-(1/2)^k)\varepsilon$ from the event when the agent finds some element $i\in E^*$ with $Y_i = 2$.
	
	Now, with probability $\left( \frac 1 2 \right)^k $, the agent doesn't find any outcomes of value $2$. Then the principal pays a cost of $k(1 - \varepsilon)$ in order to probe all $k$ elements in $E^*$. Since the agent breaks ties in favor of the principal, they will propose any acceptable outcomes of value $n$ to the principal. There exists such an outcome with probability at most $1 - \left( 1 - \frac{1}{n} \right)^k$. Therefore, the principal expects a utility of at most
	\begin{equation*}
		n \left(1 - \left( 1 - \frac{1}{n} \right)^k \right) - k(1 - \varepsilon) \le n \left(1 - \left( 1 - \frac{k}{n} \right) \right) - k \left( 1 - \varepsilon \right) = k\varepsilon
	\end{equation*}
	from this event. Hence, $\E[\del] \leq (1-(1/2)^k)\varepsilon + k\varepsilon (1/2)^k \leq O(1)\varepsilon$ for $k\geq 1$. Therefore the delegation gap is $\mathcal O(1/ n)$ when $\tau_i^y > 0$ for $i\in E^*$.
	
	Now, suppose for all elements in $i\in E^*$ that $\tau_i^y = 0$. This implies that $(n-1)p^i_{02} + p^i_{n2} = n(1-\varepsilon)$. In this case, the agent obtains $0$ utility in expectation by probing any element. Thus, the agent will try to break ties in the principal's favor. Let's say the agent probes a set of elements $S$ with observed outcomes $\mathcal S$ where they break ties in favor of the principal at every step. If there exists an element $i$ such that $(i,\cdot,2) \in \mathcal S$, then the agent will never propose an outcome $(j,\cdot,0)$ from $\mathcal S$ because they can obtain better utility by proposing an element $i$ with outcome $(i,\cdot,2)$.
	
	Let $S_t=\{i_1,\dots,i_t\}$ be the set of elements probed by the agent until now with outcome $\mathcal S_t$. Suppose the agent has observed an element $i_\ell \in S_t$ with $Y_{i_\ell} = 2$. In that case, if the agent further probes an element $i$ among the unprobed elements, then they will propose $i$ if and only if $Y_i = 2$. If the agent probes $i$, then the addition in the principal's expected utility is $n\cdot\frac{p^i_{11}}{n}\cdot \Pr[Y_i=2] - c_i <0$. Therefore, the agent will not probe any further elements. Thus, we can conclude that the agent will stop probing elements as soon as they observe an element $i$ such that $Y_i=2$. Similarly, we can show that the agent will stop probing elements if they observe an element $j$ with outcome $(j,n,0)$ before any element $i$ with realization $Y_i = 2$.
	
	We can bound the probability of the agent observing outcome $(\cdot, n, 0)$ before $(\cdot, \cdot, 2)$ by $\frac 1 n ( 1/2 + (1/2)^2 + \dots ) \leq \frac 2 n$. Let us denote the event when the agent finds an element with outcome $(\cdot, n, 0)$ before $(\cdot, \cdot, 2)$ by $\mathcal E_1$. In the event $\mathcal E_1$, the principal obtains value $n$ and pays to probe at least one element. $\E[\del|\mathcal E_1] \leq (n - 1 + \varepsilon)$. In the event $\mathcal E^c$, the agent observes an element with outcome $(\cdot , \cdot, 2)$ before $(\cdot, n, 0)$. In this event, the agent will propose the first observed element $i$ with $Y_i = 2$. Since the principal's utility for the proposed outcome is independent of the agent's, it will have value $n$ for the principal with probability $\frac{1}{n}$. Since $k \ge 1$, the principal pays a cost of $1 - \varepsilon$ for the first probe. Therefore, $\E[\del| \mathcal E^c] \leq (\frac n n - 1 + \varepsilon ) = \varepsilon$. We can now bound the expected delegated utility as follows:
	\begin{align*}
		\E[\del] &= \E[\del | \mathcal E]\Pr[\mathcal E] + \E[\del | \mathcal E^c]\Pr[\mathcal E^c]\\
		&\leq \left(\frac 2 n\right) (n-1-\varepsilon ) + \varepsilon \leq O(1).
	\end{align*}
	Therefore the delegation gap is $\mathcal O(1/\sqrt n)$ when $\tau_i^y = 0$ for $i\in E^*$.
	
	Now, consider the case when $\tau_i^* \geq 0$ for all $i\in E^*$. In this case, the agent first probes elements with positive cap values, and if they are unable to find an element with $Y_i > \tau_i^y$, then they probe elements in $E^*$ with cap value $0$. Therefore we can bound the expected delegation as $\E[\del] \leq O(1)\varepsilon + O(1) = O(1)$. This shows that the delegation gap is $O(1/\sqrt n)$.
\end{proof}

In the following proposition, we show that there exists an instance of the free-agent model in which the delegation gap for binary lottery mechanisms is $O(1/n^{1/4})$.  The instance in Proposition~\ref{prop:impos_random_2} is exactly the same instance described in Proposition~\ref{prop:impos2}. We show that the optimal binary lottery mechanism for the instance described in Proposition~\ref{prop:impos2} coincides with the optimal deterministic mechanism. Hence, the impossibility result for deterministic delegation holds for the class of binary lottery mechanisms as well.

\begin{proposition}\label{prop:impos_random_2}
	There exists an instance of the free-agent model on $n$ elements with a $1$-uniform matroid inner constraint such that the delegation gap is $O ({1}/{n^{\frac{1}{4}}})$ for binary lottery mechanisms, even when the agent breaks all ties in favor of the principal.
\end{proposition}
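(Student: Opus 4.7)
The plan is to instantiate the statement with the same family of instances used in Proposition~\ref{prop:impos2}: $n$ identical elements with $X_i = 1/p^2$ with probability $p^2 = 1/\sqrt{n}$ (else $0$), $Y_i = e^{n^2}$ with probability $1/2$ (else some small $\delta_i > 0$) independently of $X_i$, cost $c_i = 1 - p/2$, and a $1$-uniform matroid inner constraint. Since Proposition~\ref{prop:impos2} already establishes $\opt = \Theta(n^{1/4})$, it suffices to prove $\E[\del_{\mathcal R}] = O(1)$ for every binary lottery mechanism $\mathcal R$ on this instance. Following the hint preceding the statement, my approach is to show that the principal's optimal binary lottery mechanism coincides with an optimal deterministic single-proposal mechanism, thereby reducing the analysis to that of Proposition~\ref{prop:impos2}.

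To implement the reduction, I would parameterize any binary lottery mechanism by, for each element $i$, the four acceptance probabilities $\alpha_i = p^i_{1/p^2, e^{n^2}}$, $\beta_i = p^i_{1/p^2, \delta_i}$, $\gamma_i = p^i_{0, e^{n^2}}$, and $\zeta_i = p^i_{0, \delta_i}$, and then argue via a sequence of exchange arguments that at any optimum each of these lies in $\{0,1\}$. The easy steps are $\zeta_i = 0$ (accepting a zero-value outcome is strategically equivalent to rejecting it) and $\beta_i = 1$ whenever $\alpha_i > 0$ (since $\beta_i$ only affects scenarios in which the agent ultimately proposes $(i, 1/p^2, \delta_i)$, and raising $\beta_i$ to $1$ adds expected value $1/p^2$ per unit probability without changing the agent's probing trajectory, because $\beta_i \delta_i$ is negligible compared to any $(p^2/2)\alpha_j e^{n^2}$ probing reward). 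The harder steps are $\gamma_i = 0$ and $\alpha_i \in \{0,1\}$, which I would handle by a case split on whether $\gamma_i$ (respectively $\alpha_i$) is needed to cross the agent's threshold for probing $i$: if it is not needed, reducing the probability only changes what the agent proposes among already-observed outcomes and is weakly better for the principal; if it is needed, then removing the probability saves the principal a probing cost of roughly $1$ while forfeiting at most an expected reward of $(1/p^2)(p^2/2) \le 1/2$, a strictly beneficial exchange.

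Once every acceptance probability lies in $\{0,1\}$, the binary lottery mechanism is a deterministic single-proposal mechanism of the form analyzed in Proposition~\ref{prop:impos2}, namely one that accepts both $(i, 1/p^2, e^{n^2})$ and $(i, 1/p^2, \delta_i)$ for each $i$ in some set $A \subseteq E$. The bound $\E[\del_{\mathcal R}] \le 1/2 + O(ne^{-\sqrt n}) = O(1)$ established there for every choice of $|A|$ then applies directly, and combining with $\opt = \Theta(n^{1/4})$ yields the claimed delegation gap of $O(1/n^{1/4})$.

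The main obstacle will be rigorously verifying the exchange argument for $\gamma_i$, since modifying it can simultaneously alter whether the agent probes $i$, the order in which subsequent elements are probed, and which of multiple observed outcomes the agent ultimately proposes in composite realizations. A careful case analysis of the agent's entire probing trajectory, leveraging the tie-breaking-in-favor-of-principal assumption, will be required to ensure that the exchange is weakly beneficial in every contingency. Once the reduction to deterministic mechanisms is complete, invoking Proposition~\ref{prop:impos2} finishes the proof immediately.
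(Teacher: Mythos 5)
Your overall architecture is exactly the paper's: instantiate with the Proposition~\ref{prop:impos2} family, argue that an optimal binary lottery mechanism has all acceptance probabilities in $\{0,1\}$ and coincides with the optimal deterministic single-proposal mechanism, then invoke the deterministic bound together with $\opt = \Theta(n^{1/4})$. The steps for $\zeta_i$, $\beta_i$, and $\gamma_i$ also track what the paper does. The flaw is in the exchange you describe for $\alpha_i = p^i_{1/p^2,\,e^{n^2}}$: both branches of your case split push $\alpha_i$ \emph{down} (``reducing the probability \dots is weakly better''; ``removing the probability saves the principal a probing cost''). That direction is wrong for this coordinate, because $(i, 1/p^2, e^{n^2})$ is precisely the outcome from which the principal extracts value $1/p^2 = \sqrt n$; lowering $\alpha_i$ strictly hurts the principal whenever the agent would probe $i$ anyway, and carried out literally your exchange would ``show'' that the optimum never lets any element be probed, which is false (the optimal deterministic mechanism probes a nonempty set and earns positive utility). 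Relatedly, your bound of $(1/p^2)(p^2/2) \le 1/2$ on the forfeited reward only counts the $Y_i = \delta_i$ branch; with $\alpha_i = \beta_i = 1$ the principal's expected reward per probe is $p^2 \cdot \tfrac{1}{p^2} \cdot \tfrac{\alpha_i + \beta_i}{2} = 1 > c_i = 1 - p/2$, so deactivating such an element is not a beneficial exchange.

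The paper closes this step with two moves you would need to replicate. First, any element with some nonzero acceptance probability \emph{will} be probed by the free agent (the $e^{n^2}$, or even $\delta_i$, payoff gives them positive expected gain at zero cost), so optimality of $\mathcal R$ forces the principal's per-probe expected gain to be nonnegative: $\tfrac{\alpha_i + \beta_i}{2} > c_i$, hence $\alpha_i, \beta_i \ge 2c_i - 1$ for every active element, i.e.\ both are already within $O(p)$ of $1$. Second, the agent's probe/stop decisions are governed by quantities of order $\alpha_i e^{n^2}$, so perturbing $\alpha_i$ within this range does not change their trajectory, and rounding $\alpha_i$ and $\beta_i$ \emph{up} to $1$ (while sending $\gamma_i$ to $0$) is weakly beneficial for the principal in every realization. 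With that correction, your reduction to Proposition~\ref{prop:impos2} goes through as intended.
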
\label{prop:lower_bound}
\begin{proof}
	Consider an instance of the free-agent model with a $1$-uniform matroid inner constraint, and for each element $i$, let $X_i$ and $Y_i$ be independently distributed as follows:
	\begin{equation*}
		\begin{aligned}[c]
			X_i =
			\begin{cases}
				\frac{1}{\sqrt n} & \text{with prob.  } 1/\sqrt n  \\
				0, & \text{otherwise}
			\end{cases}
		\end{aligned}
		\qquad\qquad
		\begin{aligned}[c]
			Y_i =
			\begin{cases}
				\delta_i, & \text{with prob.  }  \frac 1 2  \\
				\mathrm{e}^{n}, & \text{with prob.  } \frac 1 2
			\end{cases}
		\end{aligned}
	\end{equation*}
	where $\delta_i > 0$ are sufficiently small. We set the cost for probing any element $i$ to $c_i = 1 - \varepsilon$, where $\varepsilon = \frac {2}{n^{1/4}}$. Following Proposition~\ref{prop:impos2} $\opt \geq \Theta (n^{1/4})$. For simplicity, let $p = 1/n^{1/4}$.
	
	Now we will bound the principal's optimal delegated expected utility. Consider the delegation strategy defined by some optimal set of acceptable outcomes $\mathcal R$. We let $p^i_{xy}$ denote the optimal probability chosen by the principal of accepting outcome $(i,x,y)$. For ease of notation, we let $p^i_{11} = p^i_{p^2e^n}$, $p^i_{10} = p^i_{p^2\delta_i}$, $p^i_{01} = p^i_{0e^n}$, and $p^i_{00} = p^i_{0\delta_i}$.
	
	For all $i \in E$, we claim that $(p^i_{11} + p^i_{10})/2 > 1 - \varepsilon$ or $p^i_{11} = p^i_{10} = p^i_{01} = p^i_{00} = 0$. Otherwise, if both conditions are broken, the principal obtains $\sqrt n \cdot  (p^i_{11} + p^i_{10})\cdot \frac{1}{\sqrt n} - c\leq 0$ utility in expectation whenever the agent probes element $i$, contradicting the optimality of the principal's strategy. As a result, both $p_{10}$ and $p_{11}$ have to be at least $1 - 2\varepsilon$. We now define the set of elements $E^* = \{ i : p^i_{11} > 0 \text{~and~} p^i_{10} > 0 \}$.
	
	Given $\mathcal R$, the agent's optimal strategy can be described as follows: probe elements one by one in the decreasing order of $\tau^y_i = \frac{p^i_{11}}{2\sqrt n} + \frac{p^i_{01}}{2}(1 - 1/\sqrt n)$ and propose the first element with $Y_i = e^n$ if $p^i_{X_i e^n} \geq \max_j \{ p^j_{11}, p^j_{01} \}$ for unprobed $j \in E^*$. The agent will not stop before observing an element $i \in E^*$ with $Y_i = e^n$ because they can always obtain at least $\frac{p_{11}e^n}{2\sqrt n} \geq (1-2\varepsilon)\frac{e^n}{2\sqrt n} >\delta_i$ in expectation by probing any element. Since the principal wants to maximize the chance of accepting any element $i$ with $X_i = 1/\sqrt n$, they will set $p^i_{11}=1$ and $p^i_{01} = 0$.
	
	If the agent is unable to find such an element, then they will propose some element $i$ for which $Y_i = \delta_i$ with the maximum $p^i_{X_i\delta_i} \cdot \delta_i$. Given the agent's optimal strategy, the principal wants to maximize the chance of accepting an element $i$ with $X_i = 1/\sqrt n$ whenever agent proposes such an element. Therefore, $p^i_{10} = 1$ and $p^i_{01} = 0$ for all $i \in E^*$. We have now shown that the optimal binary lottery mechanism in this instance is exactly the optimal deterministic mechanism discussed in Proposition~\ref{prop:impos2}. Hence, following the proof of Proposition~\ref{prop:impos2}, we conclude that the delegation gap with binary lottery mechanisms for the free-agent model is $O(1/n^{1/4})$.
\end{proof}


\section{Proof of Proposition~\ref{prop:impos2}}\label{appendix:proof_propo5.5}

\propimposfreeagent*
\begin{proof}
	Consider an instance of the free-agent model with a $1$-uniform matroid constraint, and for each element $i$, let $X_i$ and $Y_i$ be independently distributed as follow:
	\begin{equation*}
		\begin{aligned}[c]
			X_i =
			\begin{cases}
				\frac{1}{p^2,} & \text{with prob.  } p^2  \\
				0, & \text{with prob.  } 1-p^2
			\end{cases}
		\end{aligned}
		\qquad\qquad
		\begin{aligned}[c]
			Y_i =
			\begin{cases}
				\delta_i, & \text{with prob.  }  \frac 1 2  \\
				\mathrm{e}^{n^2}, & \text{with prob.  } \frac 1 2
			\end{cases}
		\end{aligned}
	\end{equation*}
	where $p = \frac{1}{n^{1/4}}$ and $\delta_i > 0$ are sufficiently small. We set the cost for probing any element $i$ to $c_i = 1 - \frac p 2$ and also observe that $(1 - (1 - p^2)^n) \rightarrow 1$ as $n \rightarrow \infty$.
	
	Once again, the principal's optimal non-delegated expected utility is given by the solution to Weitzman's Pandora's box problem. For each element $i$, we must determine the cap value $\tau_i$ such that $\E (X_i - \tau_i)^+ = c_i$. It's not hard to verify for this instance that $\tau_i = \frac{1}{2p} = \frac{n^{1/4}}{2}$. Then the optimal solution guarantees an expected utility of $\E[\opt] = \E \max_i \min(X_i, \tau_i)$ where each $\min(X_i, \tau_i)$ takes value $\tau_i$ with probability $p^2$ and $0$ otherwise. Therefore, $\max_i \min(X_i, \tau_i)$ takes value $\tau_i$ with probability $1 - \left( 1 - p^2 \right)^n = 1 - \left( 1 - \frac{1}{n^{1/2}} \right)^n = O(1)$ and the principal gets expected utility
	\begin{equation*}
		\opt= O(1) \tau_i = \Theta(n^{1/4}).
	\end{equation*}
	
	Now we will bound the principal's optimal delegated expected utility. Consider the delegation strategy defined by some set of acceptable outcomes $\mathcal R$. Given $\mathcal R$, the agent's optimal strategy (assuming they break ties in favor of the principal) can be described as follows: probe elements one by one for which $(i, 1/p^2, e^{n^2}) \in \mathcal R$ and propose the first observed element with $(x_i, y_i) = (1/p^2, e^{n^2})$. If they are unable to find such an element, then they probe elements with only $(i, 0, e^{n^2}) \in \mathcal R$ and propose the first element with $y_i = e^{n^2}$. Finally, they will probe all other elements in some order and propose any element with maximum $\delta_i$ among probed feasible elements.
	
	For each element $i$, the principal has no incentive to accept only $0$ utility outcomes, so an optimal strategy cannot have both $(i, 1/p^2, e^{n^2}) \notin \mathcal R$ and $(i, 1/p^2, \delta_i) \notin \mathcal R$, since then they may incentivize the agent to probe element $i$ (incurring a cost on the principal) without getting any utility back. Moreover, the principal has no incentive to accept any $0$ utility outcomes from an element $i$ even if they accept at least one of $(i, 1/p^2, e^{n^2})$ or $(i, 1/p^2, \delta_i)$. To see why, consider any delegation strategy $\mathcal R$ for which there exists an element $i$ with $(i, 0, \cdot) \in \mathcal R$. There is a nonzero probability that the agent observes only element $i$ with $Y_i = e^{n^2}$ and $X_i = 0$. In this event, dropping $(i, 0, \cdot)$ from $\mathcal R$ does not change the principal's expected utility. Since the agent breaks ties in favor of the principal, in all other cases they will propose an element $i$ with positive $X_i$. Hence, the principal's expected utility does not decrease if $(i, 0, \cdot) \notin \mathcal R$.
	
	Finally, if $\mathcal R$ is an optimal delegation strategy, then for any element $i \in E$, we have that $(i, 1/p^2, \delta_i) \in \mathcal R$ implies that $(i, 1/p^2, e^{n^2}) \in \mathcal R$ and $(i, 1/p^2, e^{n^2}) \in \mathcal R$ implies that $(i, 1/p^2, \delta_i) \in \mathcal R$. Suppose, for the sake of contradiction, that there exists an element $i$ with only $(i, 1/p^2, e^{n^2}) \in \mathcal R$. Then the agent will probe element $i$ last after probing other elements $i'$ with $(i', 1/p^2, e^{n^2}) \in \mathcal R$ and $(i', 1/p^2, \delta_i) \in \mathcal R$. Now, consider the event in which the agent probes element $i$ and it is the only element with $X_i > 0$ among all the probed elements. The probability of such an event is nonzero. However, the agent will not be able to propose an element $i$ if $Y_i = \delta_i$, which happens with probability $1/2$, and in this case the principal ends up paying the cost for probing $i$ without obtaining any value. By adding $(i, 1/p^2, \delta_i)$ to $\mathcal R$, the principal can increase their expected utility conditioned on $i$ being the only element with $X_i > 0$. In all other cases, adding $(i, 1/p^2, \delta_i)$ to $\mathcal R$ does not affect their utility. This contradicts the optimality of $\mathcal R$.
	
	For the other case, suppose there exists an element $i$ with only $(i, 1/p^2, \delta_i) \in \mathcal R$. Again, the agent first probes the elements with both $(i', 1/p^2, \delta_i) \in \mathcal R$ and $(i', 1/p^2, e^{n^2}) \in \mathcal R$. Consider the event in which they do not observe any element with $i'$ with $X_{i'} > 0$ among the elements probed so far. Now, let assume that the agent probes $i$ right after that (this is the best possible scenario for the principal as all other available elements $i'$ are such that $(i', 1/p^2, \cdot) \notin \mathcal R$). Now if $X_i > 0$ and $Y_i = e^{n^2}$, then the agent will not be able to propose element $i$ and the principal pays the cost for probing $i$ without obtaining any value. Hence, adding $(i, 1/p^2, e^{n^2})$ strictly improves the principal's expected utility in this event, and in all other events, it does not affect their utility.
	
	Now, without loss of generality, we can consider any optimal delegation strategy for the principal defined by a set of feasible elements $A = \{1, \dots, k\}$ for which the principal will accept exactly $(i, 1/p^2, e^{n^2})$ and $(i, 1/p^2, \delta_i)$. Since the agent does not incur any cost, they can probe all $k$ elements and propose their favorite acceptable element. However, we assumed that the agent breaks ties in favor of the principal, therefore they will probe elements one by one and will stop probing as soon as they find an element $j \in A$ with $(j, 1/p^2, e^{n^2})$.
	If the agent can not find any such element, then they will propose $(e, 1/p^2, \delta_e)$ with the maximum $\delta_e$ among probed elements. Now we can bound the principal's optimal delegated expected utility as:
	\begin{align}
		\E[\del]
		&\leq
		\begin{aligned}[t]
			& \Pr[X_1 = 1/p^2] \cdot \Pr[Y_1 = e^{n^2}] \left( \frac{1}{p^2} - c \right) \\
			&+ \Pr[X_2 = 1/p^2] \cdot \Pr[Y_2 = e^{n^2}] (\Pr[Y_1 = \delta_1] + \Pr[Y_1 = e^{n^2}] \cdot Pr[X_1 = 0]) \left( \frac{1}{p^2} - 2c \right) + \dots \\
			&+ \Pr[X_k = 1/p^2] \cdot \Pr[Y_k = e^{n^2}] \prod_{i=1}^{k-1} (\Pr[Y_i = \delta_i] + \Pr[Y_i = e^{n^2}] \cdot Pr[X_i = 0]) \left( \frac{1}{p^2} - kc \right) \\
			&+ \left[ \frac{1}{p^2}\left (1 - \prod_{i=1}^{k}(\Pr[X_i = 0] + \Pr[X_i = {1}/{p^2}] + \Pr[Y_i = \delta_i] \right) - ck \right]
		\end{aligned} \\
		&\leq
		\begin{aligned}[t]
			& \frac{p^2}{2}\left( \frac{1}{p^2} - c \right) + \frac{p^2}{2} \left( 1-\frac{p^2}{2} \right) \left( \frac{1}{p^2} - 2c \right) + \dots \\
			&+ \frac{p^2}{2} \left( 1 - \frac{p^2}{2} \right)^{k-1} \left( \frac{1}{p^2} - kc  \right) + \left[ (1 - (1 - p^2)^k) \frac{1}{p^2} - ck \right]
		\end{aligned} \notag
	\end{align}
	To reduce the clutter, let $r = \left( 1 - {p^2}/{2} \right)$. From Appendix B of \cite{boodaghians2020pandora}, we have that $(1 - (1 - p^2)^k) \cdot {1}/{p^2} -ck \leq 1/2$. Using this, we can simplify the above bound as:
	\begin{align}
		\E[\del]
		&\leq \frac{1}{2} \left\{1 + r + r^2 + \dots + r^{k-1} \right\} - \frac{cp^2}{2} \left(1 + 2r + \dots + kr^{k-1} \right) + \frac{1}{2} \notag \\
		&= \frac{1}{2} \left(\frac{1 - r^k}{1 - r} \right) - \frac{p^2c}{2} \left\{ \left( \frac{1 - r^k}{(1 - r)^2} \right) - \frac{kr^k}{1 - r} \right\} + \frac{1}{2} \notag \\
		&= \frac{1}{p^2} (1 - r^k) - \frac{2c}{p^2} (1 - r^k) + ck r^k + \frac{1}{2} \notag \\
		&= \left( \frac{1}{p} - \frac{1}{p^2} \right) (1 - r^k) + k r^k + \frac{1}{2} \notag \\
		&\leq \frac{1}{2} + O(ne^{-\sqrt n}) \notag
	\end{align}
	The above bound on the expected delegation holds for any budget $k$ and outer constraint to the agent. This shows that the delegation gap is at least $O(n^{1/4})$.
	
	Note that the impossibility still holds if the principal samples $\mathcal R$ from any distribution $D$ over the sets of feasible solutions. We can similarly show that the optimal distribution $D^*$ over the feasibile sets has positive support on the solutions $\mathcal R\in \Omega_\mathcal I$ which can be expressed as $\mathcal R = \{(i,1/p^2,e^{n^2}),(i,1/p^2,\delta_i):i\in A\}$ for some $A\subseteq E$. We earlier showed that for any such $\mathcal R$, $\E[\del] = O(1/n^{1/4})\cdot \opt$. Thus, $\E[\del] = O(1/n^{1/4})\cdot \opt$.
\end{proof}

\end{document}